\newtheorem{assumption}{Assumption}
\newtheorem{proposition}{Proposition}
\newtheorem{lemma}{Lemma}
\newtheorem{corollary}{Corollary}
\newtheorem{theorem}{Theorem}
\theoremstyle{definition}\newtheorem{remark}{Remark}
\long\def\symbolfootnote[#1]#2{\begingroup
\def\thefootnote{\fnsymbol{footnote}}
\footnote[#1]{#2}\endgroup}
\begin{document}

\title{An Online Convex Optimization Approach to Dynamic Network Resource Allocation}
\author{
Tianyi Chen,~\IEEEmembership{Student Member, IEEE}, Qing Ling,~\IEEEmembership{Senior Member, IEEE},\\ and Georgios B. Giannakis,~\IEEEmembership{Fellow, IEEE}
\thanks {Work in this paper was supported by NSF 1509040, 1508993, 1509005, NSF China 61573331, NSF Anhui 1608085QF130, and CAS-XDA06011203.}
\thanks{T. Chen and G. B. Giannakis are with the Department of Electrical and Computer Engineering and the Digital Technology Center, University of Minnesota, Minneapolis, MN 55455 USA. Emails: \{chen3827, georgios\}@umn.edu

Q. Ling is with the Department of Automation, University of Science and
Technology of China, Hefei, Anhui 230026, China. Email: qingling@mail.ustc.edu.cn
}
}

\markboth{}{}
\maketitle

\begin{abstract}

Existing approaches to online convex optimization (OCO) make sequential one-slot-ahead decisions, which lead to (possibly adversarial) losses that drive subsequent decision iterates.
Their performance is evaluated by the
so-called \textit{regret} that measures the difference of losses
between the online solution and the \textit{best yet fixed} overall
solution in \textit{hindsight}. The present paper deals with
online convex optimization involving adversarial loss
functions and adversarial constraints, where the constraints are
revealed after making decisions, and can be tolerable to instantaneous
violations but must be satisfied in the long term. Performance
of an online algorithm in this setting is assessed by: i) the
difference of its losses relative to the \textit{best dynamic}
solution with one-slot-ahead information of the loss function and
the constraint (that is here termed \textit{dynamic regret}); and, ii) the
accumulated amount of constraint violations (that is here termed
\textit{dynamic fit}). In this context, a modified
online saddle-point (MOSP) scheme is developed, and proved to
simultaneously yield sub-linear dynamic regret and fit, provided
that the accumulated variations of per-slot minimizers and
constraints are sub-linearly growing with time.
MOSP is also applied to the dynamic network resource allocation task, and it is compared
with the well-known stochastic dual gradient method. Under
various scenarios, numerical experiments demonstrate the
performance gain of MOSP relative to the state-of-the-art.

\end{abstract}
\begin{IEEEkeywords}
Constrained optimization, primal-dual method, online convex optimization, network resource allocation.
\end{IEEEkeywords}


\section{Introduction}

Online convex optimization (OCO) is an emerging methodology for
sequential inference with well documented merits especially when the sequence of convex costs varies in an unknown and possibly adversarial manner
\cite{zinkevich2003,hazan2007,shalev2011}.
Starting from the seminal papers \cite{zinkevich2003} and \cite{hazan2007},
most of the early works evaluate OCO algorithms with a
\textit{static regret}, which measures the difference of costs (a.k.a. losses)
between the online solution and the overall best static solution in hindsight.
If an algorithm incurs static regret that increases
sub-linearly with time, then its performance loss averaged over
an infinite time horizon goes to zero; see also
\cite{shalev2011,hazan2016}, and references therein.

However, static regret is not a comprehensive
performance metric \cite{besbes2015}. Take online parameter
estimation as an example. When the true parameter varies over
time, a static benchmark (time-invariant estimator) itself often performs poorly
so that achieving sub-linear static regret is no longer
attractive. Recent works
\cite{besbes2015,hall2015,jadbabaie2015,mokhtari2016b} extend the
analysis of static regret to that of \textit{dynamic regret},
where the performance of an OCO algorithm is benchmarked by the
best dynamic solution with a-priori information on the one-slot-ahead cost function.
Sub-linear dynamic regret is proved
to be possible, if the dynamic environment changes slow enough for the accumulated variation of either costs or
per-slot minimizers to be sub-linearly increasing with respect to the
time horizon.
When the per-slot costs depend on previous decisions, the so-termed competitive
difference can be employed as an alternative of the static regret \cite{chenadam2016,andrew2013}.

The aforementioned works \cite{besbes2015,chenadam2016,andrew2013,hall2015,jadbabaie2015,mokhtari2016b} deal with dynamic costs focusing on problems with time-invariant
constraints that must be strictly satisfied, but do not allow for instantaneous violations of the constraints.
The \textit{long-term} effect of such instantaneous violations was studied in \cite{mahdavi2012}, where an online algorithm with sub-linear static regret and
sub-linear accumulated constraint violation was also developed.
The regret bounds in
\cite{mahdavi2012} have been improved in
\cite{yu2016} and \cite{paternain2016}.
Decentralized optimization with consensus constraints,
as a special case of having long-term but time-invariant
constraints, has been studied in
\cite{koppel2015,wang2012,shahrampour2016}. Nevertheless, \cite{mahdavi2012,paternain2016,koppel2015,yu2016,wang2012,shahrampour2016} do not deal with OCO under time-varying adversarial constraints.

\begin{table*}\addtolength{\tabcolsep}{0pt}
\centering \caption{A summary of related works on discrete time
OCO}\label{tab:comp} \vspace{-0.3cm}
    \begin{tabular}{ c *{3}{|c}}
    \hline
~~~~~~Reference~~~~~~   & ~~~~~~Type of benchmark~~~~~~ & Long-term constraint & Adversarial constraint   \\ \hline
\cite{zinkevich2003}         & Static and dynamic     & No    & No                                 \\
\cite{hazan2007,shalev2011,hazan2016}    & Static       & No    & No                             \\
\cite{besbes2015,mokhtari2016b,zhang2016oco}    & Dynamic     & No    & No                           \\
\cite{hall2015,jadbabaie2015}    & Dynamic    & No    & No                               \\
\cite{chenadam2016,andrew2013} & Dynamic & No & No \\
\cite{mahdavi2012,yu2016,koppel2015,wang2012}    & Static      & Yes    & No                            \\
 \cite{shahrampour2016} & Dynamic     & Yes    & No                        \\
 This work & Dynamic      & Yes    & Yes                             \\
 \hline
    \end{tabular} \vspace{-0.3cm}
\end{table*}

In this context, the present paper considers OCO with time-varying
constraints that must be satisfied in the long term. Under this
setting, the learner first takes an action without knowing
a-priori either the adversarial cost or the time-varying
constraint, which are revealed by the nature subsequently. Its
performance is evaluated by: i) \textit{dynamic regret} that is
the optimality loss relative to a sequence of instantaneous
minimizers with known costs and constraints; and, ii)
\textit{dynamic fit} that accumulates constraint violations
incurred by the online learner due to the lack of knowledge about
future constraints. We compare the OCO setting here with
those of existing works in Table \ref{tab:comp}.

We further introduce a modified online saddle-point (MOSP) method
in this novel OCO framework, where the learner deals with
time-varying costs as well as time-varying but long-term
constraints. We analytically establish that MOSP simultaneously achieves
sub-linear dynamic regret and fit, provided that the
accumulated variations of both minimizers and constraints grow
sub-linearly. This result provides valuable insights for OCO with
long-term constraints: \textit{When the dynamic environment
comprising both costs and constraints does not change on
average, the online decisions provided by MOSP are as good as the
best dynamic solution over a long time horizon.}

To demonstrate the impact of these results, we further apply the proposed MOSP approach to a dynamic network resource allocation task,
where online management of resources is sought without knowing future network states.
Existing algorithms include first- and second-order methods in the
dual domain
\cite{low1999,xiao2004,ghadimi2013,beck2014,wei2013,chen2016jsac}, which are tailored for time-invariant deterministic formulations. To
capture the temporal variations of network resources, stochastic
formulation of network resource allocation has been extensively
pursued since the seminal work of \cite{tassiulas1992}; see also the
celebrated stochastic dual gradient method in
\cite{neely2010,marques12}. These stochastic
approximation-based approaches assume that the
time-varying costs are i.i.d. or generally samples from a stationary ergodic stochastic process
\cite{robbins1951,nemirovski2009,duchi2012}. However, performance
of most stochastic schemes is established in an asymptotic sense, considering the ensemble of per slot averages or infinite samples across time.
Clearly, stationarity
may not hold in practice, especially when the
stochastic process involves human participation.
Inheriting merits of the OCO framework, the proposed MOSP
approach operates in a fully online mode without requiring non-causal information, and further admits finite-sample performance analysis under a sequence of non-stochastic, or even
adversarial costs and constraints.

Relative to existing works, the main contributions of the
present paper are summarized as follows.
\begin{enumerate}
\item [c1)] We generalize the standard OCO framework with only
adversarial costs in
\cite{zinkevich2003,hazan2007,shalev2011,hazan2016} to account for
both adversarial costs and constraints. Different from the regret
analysis in \cite{mahdavi2012,koppel2015,paternain2016,yu2016,wang2012},
performance here is established relative to the best dynamic
benchmark, via metrics that we term dynamic regret and fit.
\item [c2)] We develop an MOSP algorithm to tackle this novel OCO
problem, and analytically establish that MOSP yields
simultaneously sub-linear dynamic regret and fit, provided that
the accumulated variations of per-slot minimizers and constraints
are sub-linearly growing with time.
 \item [c3)]
Our novel OCO approach is tailored for dynamic resource
allocation tasks, where MOSP is compared with the popular
stochastic dual gradient approach.
Relative to the latter, MOSP remains operational in a broader practical setting without probabilistic
assumptions. Numerical tests demonstrate the gain of
MOSP over state-of-the-art alternatives.
\end{enumerate}

\emph{Outline}. The rest of the paper is organized as follows. The
OCO problem with long-term constraints is formulated, and the
relevant performance metrics are introduced in Section II. The MOSP algorithm and its performance analysis are
presented in Section III.
Application of the novel OCO framework and the MOSP algorithm in
network resource allocation, as well as corresponding numerical
tests, are provided in Section IV. Section V concludes the paper.

\emph{Notation}. $\mathbb{E}$ denotes expectation,
$\mathbb{P}$ stands for probability, $(\cdot)^{\top}$ stands for
vector and matrix transposition, and $\|\mathbf{x}\|$ denotes the $\ell_2$-norm of a vector $\mathbf{x}$. Inequalities for vectors,
e.g., $\mathbf{x} > \mathbf{0}$, are defined entry-wise. The
positive projection operator is defined as $[\mathbf{a}]^+:=\max\{\mathbf{a},\mathbf{0}\}$,
also entry-wise.

\section{OCO with Long-term Time-varying Constraints}\label{sec.LTOCO}

In this section, we introduce the generic OCO formulation with
long-term time-varying constraints, along with pertinent metrics to
evaluate an OCO algorithm.

\subsection{Problem formulation}

We begin with the classical OCO setting, where constraints are
time-invariant and must be strictly satisfied. OCO can be viewed as a repeated game between a learner and nature \cite{shalev2011,zinkevich2003,hazan2007}. Consider that
time is discrete and indexed by $t$. Per slot $t$, a learner
selects an action $\mathbf{x}_t$ from a convex set
${\cal X}\subseteq\mathbb{R}^I$, and subsequently nature chooses a
(possibly adversarial) loss function $f_t(\,\cdot\,):
\mathbb{R}^I\rightarrow \mathbb{R}$ through which the learner incurs a
loss $f_t(\mathbf{x}_t)$. The convex set ${\cal X}$
is a-priori known and fixed over the entire time horizon. Although
this standard OCO setting is appealing to various applications
such as online regression and classification
\cite{zinkevich2003,hazan2007,shalev2011}, it does not account for potential variations of (possibly unknown) constraints, and does
not deal with constraints that can possibly be satisfied in the long
term rather than a slot-by-slot basis.
Online optimization with
time-varying and long-term constraints is well motivated for
applications such as navigation, tracking, and dynamic resource allocation
\cite{neely2010,paternain2016,marques12,chen2017tpds}. Taking resource
allocation as an example, time-varying long-term constraints are
usually imposed to tolerate instantaneous violations when
available resources cannot satisfy user requests, and hence allow
flexible adaptation of online decisions to temporal variations of
resource availability.

To broaden the applicability of OCO to these scenarios, we
consider that per slot $t$, a learner selects an action
$\mathbf{x}_t$ from a known and fixed convex set ${\cal
X}\subseteq\mathbb{R}^I$, and then nature reveals not only a
loss function $f_t(\cdot): \mathbb{R}^I\rightarrow \mathbb{R}$ but
also a time-varying (possibly adversarial) penalty function
$\mathbf{g}_t(\cdot): \mathbb{R}^I\rightarrow \mathbb{R}^I$.
This function leads to a time-varying constraint
$\mathbf{g}_t(\mathbf{x})\leq \mathbf{0}$, which is driven by
the unknown dynamics in various applications, e.g., on-demand data
request arrivals in resource allocation. Different from the known
and fixed set ${\cal X}$, the time-varying constraint
$\mathbf{g}_t(\mathbf{x})\leq \mathbf{0}$ can vary arbitrarily or
even adversarially from slot to slot. It is revealed only after the learner makes
her/his decision, and hence it is hard to be satisfied in every
time slot. Therefore, the goal in this context is to find a sequence
of online solutions $\{\mathbf{x}_t\in {\cal X}\}$ that minimize
the aggregate loss, and ensures that the constraints
$\{\mathbf{g}_t(\mathbf{x}_t)\leq \mathbf{0}\}$ are satisfied in the
long-term on average. Specifically, we aim to solve the following
online optimization problem
\begin{subequations}
\label{eq.prob}
\begin{align}
\min_{\{\mathbf{x}_t\in {\cal X},\forall t\}} ~&\sum_{t=1}^T f_t(\mathbf{x}_t)  \label{eq.proba}\\
\text{s. t.}~~&\sum_{t=1}^T \mathbf{g}_t(\mathbf{x}_t) \leq \mathbf{0}\label{eq.probb}
\end{align}
\end{subequations}
where $T$ is the time horizon, $\mathbf{x}_t\in\mathbb{R}^I$ is
the decision variable, $f_t$ is the cost function,
$\mathbf{g}_t:=[g_t^1,\ldots,g_t^I]^{\top}$ denotes the constraint
function with $i$th entry $g_t^i:\mathbb{R}^I\rightarrow
\mathbb{R}$, and ${\cal X}\in \mathbb{R}^I$ is a convex set. The
formulation \eqref{eq.prob} extends the standard OCO framework to
accommodate adversarial time-varying constraints that must be
satisfied in the long term. Complemented by algorithm development
and performance analysis to be carried in the following sections, the main
contribution of the present paper is incorporation of long-term
and time-varying constraints to markedly broaden the scope of OCO.

\subsection{Performance and feasibility metrics}

Regarding performance of online decisions
$\{\mathbf{x}_t\}_{t=1}^T$, static regret is adopted as a metric by standard OCO schemes, under time-invariant and strictly
satisfied constraints. The static regret measures the difference
between the online loss of an OCO algorithm and that of the best
fixed solution in hindsight
\cite{zinkevich2003,hazan2007,shalev2011}. Extending the
definition of static regret over $T$ slots to accommodate time-varying constraints, it can be written as (see also
\cite{paternain2016})
\begin{align}\label{eq.sta-reg}
    {\rm Reg}_{T}^{\rm s}:=\sum_{t=1}^T f_t(\mathbf{x}_t)-\sum_{t=1}^Tf_t(\mathbf{x}^*)
\end{align}
where the best static solution $\mathbf{x}^*$ is obtained as
\begin{equation}\label{eq.slot-opt}
    \mathbf{x}^*\in\arg\min_{\mathbf{x}\in {\cal X}} \,\sum_{t=1}^T f_t(\mathbf{x})~~\text{s. t.}~\mathbf{g}_t(\mathbf{x}) \leq \mathbf{0},\;\forall t.
\end{equation}
A desirable OCO algorithm in this case is the one yielding a
sub-linear regret \cite{mahdavi2012,yu2016}, meaning ${\rm
Reg}_{T}^{\rm s}=\mathbf{o}(T)$. Consequently, $\lim_{T\rightarrow
\infty}{{\rm Reg}_{T}^{\rm s}}/{T}=0$ implies that the algorithm
is ``on average'' no-regret, or in other words, not worse asymptotically than the
best fixed solution $\mathbf{x}^*$. Though widely used in various
OCO applications, the aforementioned \textit{static regret} metric relies on a rather coarse benchmark, which may be less useful especially in dynamic settings.
For instance, \cite[Example 2]{besbes2015} shows
that the gap between the best static and the best dynamic
benchmark can be as large as ${\cal O}(T)$. Furthermore, since the
time-varying constraint $\mathbf{g}_t(\mathbf{x}_t)\leq
\mathbf{0}$ is not observed before making a decision $\mathbf{x}_t$,
its feasibility can not be checked instantaneously.

In response to the quest for improved benchmarks in this dynamic setup, two metrics
are considered here: \textit{dynamic regret} and \textit{dynamic
fit}. The notion of dynamic regret (also termed tracking regret or
adaptive regret) has been recently introduced in
\cite{besbes2015,hall2015,jadbabaie2015,mokhtari2016b} to offer a
competitive performance measure of OCO algorithms under
time-invariant constraints. We adopt it in the setting of
\eqref{eq.prob} by incorporating time-varying
constraints
\begin{align}\label{eq.dyn-reg}
    {\rm Reg}^{\rm d}_T:=\sum_{t=1}^T f_t(\mathbf{x}_t)-\sum_{t=1}^T f_t(\mathbf{x}_t^*)
\end{align}
where the benchmark is now formed via a sequence of best dynamic
solutions $\{\mathbf{x}_t^*\}$ for the instantaneous cost
minimization problem subject to the instantaneous constraint,
namely
\begin{equation}\label{eq.realtime-prob}
    \mathbf{x}_t^*\in\arg\min_{\mathbf{x}\in {\cal X}} \; f_t(\mathbf{x})~~~\text{s. t.}~\mathbf{g}_t(\mathbf{x}) \leq \mathbf{0}.
\end{equation}
Clearly, the dynamic regret is always larger
than the static regret in \eqref{eq.sta-reg}, i.e., ${\rm
Reg}^{\rm s}_T \leq {\rm Reg}^{\rm d}_T$, because
$\sum_{t=1}^Tf_t(\mathbf{x}^*)$ is always no smaller than
$\sum_{t=1}^Tf_t(\mathbf{x}^*_t)$ according to the definitions of
$\mathbf{x}^*$ and $\mathbf{x}^*_t$.
Hence, a sub-linear dynamic regret implies a sub-linear static regret, but not vice versa.

To ensure feasibility of online decisions, the notion of
\textit{dynamic fit} is introduced to measure the
accumulated violation of constraints; under time-invariant
long-term constraints \cite{mahdavi2012,koppel2015,yu2016} or
under time-varying constraints \cite {paternain2016}.
It is defined as
\begin{align}\label{eq.dyn-fit}
    {\rm Fit}^{\rm d}_T:=\left\|\left[\sum_{t=1}^T \mathbf{g}_t(\mathbf{x}_t)\right]^+\right\|.
\end{align}
Observe that the dynamic fit is zero if the accumulated
violation $\sum_{t=1}^T \mathbf{g}_t(\mathbf{x}_t)$ is entry-wise
less than zero. However, enforcing
$\sum_{t=1}^T \mathbf{g}_t(\mathbf{x}_t) \leq \mathbf{0}$ is
different from restricting $\mathbf{x}_t$ to meet
$\mathbf{g}_t(\mathbf{x}_t)\leq \mathbf{0}$ in each and every slot.
While the latter
readily implies the former, the long-term (aggregate) constraint allows adaptation of
online decisions to the environment dynamics; as a result, it is
tolerable to have $\mathbf{g}_t(\mathbf{x}_t)\geq \mathbf{0}$ and
$\mathbf{g}_{t+1}(\mathbf{x}_{t+1})\leq \mathbf{0}$.

An ideal algorithm in this broader OCO framework is the
one that achieves both sub-linear dynamic regret and sub-linear
dynamic fit. A sub-linear dynamic regret implies ``no-regret''
relative to the clairvoyant dynamic solution on the long-term
average; i.e., $\lim_{T\rightarrow \infty}{{\rm Reg}_{T}^{\rm
d}}/{T}=0$; and a sub-linear dynamic fit indicates that the online
strategy is also feasible on average; i.e., $\lim_{T\rightarrow
\infty}{{\rm Fit}_{T}^{\rm d}}/{T}=0$. Unfortunately, the sub-linear
dynamic regret is not achievable in general, even under the
special case of \eqref{eq.prob} where the time-varying constraint
is absent \cite{besbes2015}.
For this reason, we aim at designing and
analyzing an online strategy that generates a sequence $\{\mathbf{x}_t\}_{t=1}^T$ ensuring sub-linear dynamic regret and fit, under mild conditions that must be satisfied by the cost and constraint variations.

\section{Modified Online saddle-point (MOSP) Method}\label{sec.OSP}

In this section, a modified online saddle-point method is
developed to solve \eqref{eq.prob}, and its performance and feasibility are analyzed using the dynamic regret and fit metrics.

\subsection{Algorithm development}

Consider now the per-slot problem \eqref{eq.realtime-prob}, which
contains the current objective $f_t(\mathbf{x})$, the current
constraint $\mathbf{g}_t(\mathbf{x})\leq \mathbf{0}$, and a
time-invariant constraint set ${\cal X}$. With $\bm{\lambda}\in
\mathbb{R}^N_+$ denoting the Lagrange multiplier associated with
the time-varying constraint, the online (partial) Lagrangian of
\eqref{eq.realtime-prob} can be expressed as
\begin{align}\label{eq.Lam}
{\cal L}_t(\mathbf{x},\bm{\lambda}):=f_t(\mathbf{x})+\bm{\lambda}^{\top}\mathbf{g}_t(\mathbf{x})
\end{align}
where $\mathbf{x}\in {\cal X}$ remains implicit. For the online
Lagrangian \eqref{eq.Lam}, we introduce a modified online saddle
point (MOSP) approach, which takes a modified descent step in the
primal domain, and a dual ascent step at each time slot $t$.
Specifically, given the previous primal iterate $\mathbf{x}_{t-1}$
and the current dual iterate $\bm{\lambda}_t$ at each slot $t$,
the current decision $\mathbf{x}_t$ is the minimizer of the
following optimization problem
\begin{equation}\label{eq.primal}
        \min_{\mathbf{x}\in {\cal X}}\, \nabla^{\top} f_{t-1}(\mathbf{x}_{t-1})\!\left(\mathbf{x}-\mathbf{x}_{t-1}\right)+\bm{\lambda}_t^{\top}\mathbf{g}_{t-1}(\mathbf{x})+\frac{\|\mathbf{x}-\mathbf{x}_{t-1}\|^2}{2\alpha}
\end{equation}
where $\alpha$ is a positive stepsize, and $\nabla
f_{t-1}(\mathbf{x}_{t-1})$ is the gradient\footnote{One can
replace the gradient by one of the sub-gradients when
$f_t(\mathbf{x})$ is non-differentiable. The performance analysis
still holds true for this case.} of primal objective
$f_{t-1}(\mathbf{x})$ at $\mathbf{x}=\mathbf{x}_{t-1}$. After the
current decision $\mathbf{x}_t$ is made, $f_t(\mathbf{x})$ and $\mathbf{g}_t(\mathbf{x})$ are observed, and the dual update takes the form
\begin{equation}\label{eq.dual}
        \bm{\lambda}_{t+1}=\big[\bm{\lambda}_t+\mu \nabla_{\bm{\lambda}}{\cal L}_t(\mathbf{x}_t,\bm{\lambda}_t)\big]^{+}=\big[\bm{\lambda}_t+\mu \mathbf{g}_t(\mathbf{x}_t)\big]^{+}
\end{equation}
where $\mu$ is also a positive stepsize, and
$\nabla_{\bm{\lambda}}{\cal
L}_t(\mathbf{x}_t,\bm{\lambda}_t)=\mathbf{g}_t(\mathbf{x}_t)$ is
the gradient of online Lagrangian \eqref{eq.Lam} with respect to
(w.r.t.) $\bm{\lambda}$ at $\bm{\lambda}=\bm{\lambda}_t$.

\begin{remark}
The primal gradient step of the classical saddle-point approach in
\cite{koppel2015,mahdavi2012,paternain2016} is tantamount to
minimizing a first-order approximation of ${\cal
L}_{t-1}(\mathbf{x},\bm{\lambda}_t)$ at
$\mathbf{x}=\mathbf{x}_{t-1}$ plus a proximal term
$\|\mathbf{x}-\mathbf{x}_{t-1}\|^2/(2\alpha)$. We call the
primal-dual recursion \eqref{eq.primal} and \eqref{eq.dual} as a
modified online saddle-point approach, since the primal update
\eqref{eq.primal} is not an exact gradient step when the
constraint $\mathbf{g}_t(\mathbf{x})$ is nonlinear w.r.t.
$\mathbf{x}$. However, when $\mathbf{g}_t(\mathbf{x})$ is linear,
\eqref{eq.primal} and \eqref{eq.dual} reduce to the approach in
\cite{koppel2015,mahdavi2012,paternain2016}. Similar to the primal
update of OCO with long-term but time-invariant constraints in
\cite{yu2016}, the minimization in \eqref{eq.primal} penalizes the
exact constraint violation $\mathbf{g}_t(\mathbf{x})$ instead of
its first-order approximation, which improves control
of constraint violations and facilitates performance analysis of MOSP.
\end{remark}

\begin{algorithm}[t]
\caption{Modified online saddle-point (MOSP) method}\label{algo1}
\begin{algorithmic}[1]
\State \textbf{Initialize:} primal iterate $\mathbf{x}_0$, dual iterate $\bm{\lambda}_1$, and proper
stepsizes $\alpha$ and $\mu$.
\For {$t=1,2\dots$}
\State Update primal variable $\mathbf{x}_t$ by solving \eqref{eq.primal}.
\State Observe the current cost $f_t(\mathbf{x})$ and constraint $\mathbf{g}_t(\mathbf{x})$.
\State Update the dual variable $\bm{\lambda}_{t+1}$ via \eqref{eq.dual}.
\EndFor
\end{algorithmic}
\end{algorithm}

\subsection{Performance analysis}\label{sec.regret}
We proceed to show that for MOSP, the dynamic regret in
\eqref{eq.dyn-reg} and the dynamic fit in \eqref{eq.dyn-fit} are
both sub-linearly increasing if the variations of the
per-slot minimizers and the constraints are small enough. Before
formally stating this result, we assume that the following
conditions are satisfied.

\begin{assumption}\label{ass.0}
For every $t$, the cost function $f_t(\mathbf{x})$ and the
time-varying constraint $\mathbf{g}_t(\mathbf{x})$ in
\eqref{eq.prob} are convex.
\end{assumption}

\begin{assumption}\label{ass.1}
For every $t$, $f_t(\mathbf{x})$ has bounded
gradient on ${\cal X}$; i.e., $\|\nabla
f_t(\mathbf{x})\|\leq G,\;\forall \mathbf{x}\in {\cal X}$; and $\mathbf{g}_t(\mathbf{x})$ is
bounded on ${\cal X}$; i.e.,
$\|\mathbf{g}_t(\mathbf{x})\|\leq M,\;\forall \mathbf{x}\in {\cal
X}$.
\end{assumption}

\begin{assumption}\label{ass.2}
The radius of the convex feasible set ${\cal X}$ is bounded; i.e.,
$\|\mathbf{x}-\mathbf{y}\|\leq R,\; \forall
\mathbf{x},\mathbf{y}\in {\cal X}$.
\end{assumption}

\begin{assumption}\label{ass.3}
There exists a constant $\epsilon>0$, and an interior point $\tilde{\mathbf{x}}\in {\cal X}$ such that $\mathbf{g}_t(\tilde{\mathbf{x}}) \leq -\epsilon\mathbf{1},\;\forall t$.
\end{assumption}

Assumption \ref{ass.0} is necessary for regret analysis in the OCO
setting. Assumption \ref{ass.1} bounds primal and dual gradients per slot, which is also typical in OCO
\cite{shalev2011,yu2016,koppel2015,hall2015}.
Assumption \ref{ass.2} restricts the action set to be bounded.
Assumption \ref{ass.3} is Slater's condition, which guarantees
the existence of a bounded Lagrange multiplier
\cite{bertsekas1999}.

Under these assumptions, we are on track to first
provide an upper bound for the dynamic fit.
\begin{theorem}\label{Them1}
Define the maximum variation of consecutive constraints as
\begin{equation}\label{eq.maxgt}
    \bar{V}(\mathbf{g})\!:=\!\max_t V(\mathbf{g}_t),{\rm~with~} V(\mathbf{g}_t)\!:=\!\max_{\mathbf{x}\in {\cal X}}\,\left\|\left[\mathbf{g}_{t+1}(\mathbf{x})\!-\!\mathbf{g}_t(\mathbf{x})\right]^+\right\|
\end{equation}
and assume the slack constant $\epsilon$ in Assumption \ref{ass.3}
to be larger than the maximum variation\footnote{This equivalently
requires $\epsilon:=\min_{i,t}\max_{\mathbf{x}\in{\cal X}}
[-g^i_t(\mathbf{x})]^+ > \max_{\mathbf{x}\in {\cal X}}$
$\big\|\big[\mathbf{g}_{t+1}(\mathbf{x})-\mathbf{g}_t(\mathbf{x})\big]^+\big\|$,
which is valid when the region defined by
$\mathbf{g}_t(\mathbf{x})\leq \mathbf{0}$ is large enough, or, the
trajectory of $\mathbf{g}_t(\mathbf{x})$ is smooth enough across time.};
i.e., $\epsilon> \bar{V}(\mathbf{g})$. Then under Assumptions
\ref{ass.0}-\ref{ass.3} and the dual variable initialization
$\bm{\lambda}_1 = \mathbf{0}$, the dual iterate for the MOSP recursion
\eqref{eq.primal}-\eqref{eq.dual} is
bounded by
\begin{equation}\label{eq.upper-dual}
    \|\bm{\lambda}_t\|\leq \|\bar{\bm{\lambda}}\|:=\mu M+\frac{2 GR+{R^2}/(2\alpha)+(\mu M^2)/{2}}{\epsilon-\bar{V}(\mathbf{g})},\;\forall t
\end{equation}
and the dynamic fit in \eqref{eq.dyn-fit} is upper-bounded by
    \begin{align}\label{Them.fit}
        {\rm Fit}^{\rm d}_T \leq \frac{\|\bm{\lambda}_{T+1}\|}{\mu} \leq \frac{\|\bar{\bm{\lambda}}\|}{\mu} = M +\frac{2 GR/\mu\!+\!{R^2}/(2\alpha\mu)\!+\!M^2/2}{\epsilon-\bar{V}(\mathbf{g})}
    \end{align}
    where $G$, $M$, $R$, and $\epsilon$ are as in Assumptions \ref{ass.1}-\ref{ass.3}.
\end{theorem}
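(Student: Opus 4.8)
The plan is to prove the two displayed bounds in the reverse of the order stated: I would first establish the dynamic-fit inequality \eqref{Them.fit} directly from the dual recursion \eqref{eq.dual}, and then devote the bulk of the work to the dual bound \eqref{eq.upper-dual} via a drift (Lyapunov) argument on $\|\bm{\lambda}_t\|^2$ closed by induction on $t$. The fit bound is the easy half. Because $[\cdot]^+$ is the projection onto the nonnegative orthant, \eqref{eq.dual} yields the entrywise inequality $\bm{\lambda}_{t+1}\geq\bm{\lambda}_t+\mu\mathbf{g}_t(\mathbf{x}_t)$, i.e. $\mu\mathbf{g}_t(\mathbf{x}_t)\leq\bm{\lambda}_{t+1}-\bm{\lambda}_t$. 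Telescoping over $t=1,\dots,T$ with the initialization $\bm{\lambda}_1=\mathbf{0}$ gives $\mu\sum_{t=1}^T\mathbf{g}_t(\mathbf{x}_t)\leq\bm{\lambda}_{T+1}$; since $\bm{\lambda}_{T+1}\geq\mathbf{0}$ and $[\cdot]^+$ is monotone, taking positive parts and norms produces ${\rm Fit}^{\rm d}_T\leq\|\bm{\lambda}_{T+1}\|/\mu$, and substituting \eqref{eq.upper-dual} gives the closed form.

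The core of the proof is the dual bound. I would begin with the non-expansiveness of $[\cdot]^+$ relative to the origin to write $\|\bm{\lambda}_{t+1}\|^2\leq\|\bm{\lambda}_t\|^2+2\mu\bm{\lambda}_t^\top\mathbf{g}_t(\mathbf{x}_t)+\mu^2\|\mathbf{g}_t(\mathbf{x}_t)\|^2$, bounding the last term by $\mu^2M^2$ through Assumption \ref{ass.1}. Everything then hinges on upper-bounding the cross term $\bm{\lambda}_t^\top\mathbf{g}_t(\mathbf{x}_t)$ so that it contributes a strictly negative multiple of $\|\bm{\lambda}_t\|$. To get the negative term I would use optimality of $\mathbf{x}_t$ in the primal subproblem \eqref{eq.primal}, comparing its objective value against that of the Slater point $\tilde{\mathbf{x}}$ of Assumption \ref{ass.3}. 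This gives $\bm{\lambda}_t^\top\mathbf{g}_{t-1}(\mathbf{x}_t)\leq\bm{\lambda}_t^\top\mathbf{g}_{t-1}(\tilde{\mathbf{x}})+\nabla^\top f_{t-1}(\mathbf{x}_{t-1})(\tilde{\mathbf{x}}-\mathbf{x}_t)+(\|\tilde{\mathbf{x}}-\mathbf{x}_{t-1}\|^2-\|\mathbf{x}_t-\mathbf{x}_{t-1}\|^2)/(2\alpha)$; invoking $\mathbf{g}_{t-1}(\tilde{\mathbf{x}})\leq-\epsilon\mathbf{1}$ together with $\bm{\lambda}_t\geq\mathbf{0}$ and $\|\bm{\lambda}_t\|_1\geq\|\bm{\lambda}_t\|$ turns the first summand into $-\epsilon\|\bm{\lambda}_t\|$, Assumptions \ref{ass.1}--\ref{ass.2} bound the gradient inner product by $GR$ and the proximal term by $R^2/(2\alpha)$, and the $-\|\mathbf{x}_t-\mathbf{x}_{t-1}\|^2/(2\alpha)$ term is dropped.

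The main obstacle is that the primal step \eqref{eq.primal} penalizes the delayed constraint $\mathbf{g}_{t-1}$, while the drift naturally involves $\mathbf{g}_t$. I would bridge this by writing $\bm{\lambda}_t^\top\mathbf{g}_t(\mathbf{x}_t)=\bm{\lambda}_t^\top\mathbf{g}_{t-1}(\mathbf{x}_t)+\bm{\lambda}_t^\top[\mathbf{g}_t(\mathbf{x}_t)-\mathbf{g}_{t-1}(\mathbf{x}_t)]$ and, using $\bm{\lambda}_t\geq\mathbf{0}$, controlling the correction by $\bm{\lambda}_t^\top[\mathbf{g}_t(\mathbf{x}_t)-\mathbf{g}_{t-1}(\mathbf{x}_t)]^+\leq\|\bm{\lambda}_t\|\,\bar V(\mathbf{g})$ through the definition \eqref{eq.maxgt}. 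This is precisely where the hypothesis $\epsilon>\bar V(\mathbf{g})$ enters: the two linear-in-$\|\bm{\lambda}_t\|$ contributions combine into $-(\epsilon-\bar V(\mathbf{g}))\|\bm{\lambda}_t\|$ with a strictly negative coefficient. Assembling the pieces gives the one-step drift $\|\bm{\lambda}_{t+1}\|^2\leq\|\bm{\lambda}_t\|^2-2\mu(\epsilon-\bar V(\mathbf{g}))\|\bm{\lambda}_t\|+2\mu(GR+R^2/(2\alpha))+\mu^2M^2$.

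Finally I would close the dual bound by induction on $t$, with the base case $\|\bm{\lambda}_1\|=0\leq\|\bar{\bm{\lambda}}\|$ holding by initialization. Assuming $\|\bm{\lambda}_t\|\leq\|\bar{\bm{\lambda}}\|$, I split into two cases relative to the threshold $\|\bar{\bm{\lambda}}\|-\mu M$. If $\|\bm{\lambda}_t\|\leq\|\bar{\bm{\lambda}}\|-\mu M$, the crude per-step increment $\|\bm{\lambda}_{t+1}\|\leq\|\bm{\lambda}_t\|+\mu M$ (from \eqref{eq.dual} and Assumption \ref{ass.1}) keeps the iterate within the ball. Otherwise $\|\bm{\lambda}_t\|>\|\bar{\bm{\lambda}}\|-\mu M$, and I would verify by direct computation that $\|\bar{\bm{\lambda}}\|-\mu M$ is calibrated to dominate the threshold $(2\mu(GR+R^2/(2\alpha))+\mu^2M^2)/(2\mu(\epsilon-\bar V(\mathbf{g})))$ at which the drift right-hand side drops below $\|\bm{\lambda}_t\|^2$; hence $\|\bm{\lambda}_{t+1}\|\leq\|\bm{\lambda}_t\|\leq\|\bar{\bm{\lambda}}\|$. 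I expect the bookkeeping in this final calibration, and the sign management in the cross-term expansion, to be the most delicate parts, but each reduces to routine algebra once the drift inequality is in place.
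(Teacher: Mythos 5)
Your proposal is correct and follows essentially the same route as the paper's proof: the drift bound on $\|\bm{\lambda}_t\|^2$, comparison of the primal subproblem's objective at the iterate against the Slater point to extract the $-\epsilon\|\bm{\lambda}_t\|$ term, the $\bar V(\mathbf{g})$ correction for the one-slot constraint mismatch, and the telescoped dual recursion for the fit bound; your two-case induction is just the contrapositive of the paper's ``first time the bound fails'' contradiction argument. (Your single $GR$ in place of the paper's $2GR$ is a harmless, slightly tighter constant that still implies \eqref{eq.upper-dual}.)
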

\begin{proof}
See Appendix \ref{app.thm1}. 	
\end{proof}

Theorem \ref{Them1} asserts that under a mild condition on the
time-varying constraints, $\|\bm{\lambda}_t\|$ is uniformly upper-bounded, and more
importantly, its scaled version ${\|\bm{\lambda}_{T+1}\|}/{\mu}$
upper bounds the dynamic fit. Observe that with a
fixed primal stepsize $\alpha$, ${\rm Fit}^{\rm d}_T$
is in the order of ${\cal O}(1/\mu)$, thus a larger dual stepsize
essentially enables a better satisfaction of long-term
constraints. In addition, a smaller $\bar{V}(\mathbf{g})$ leads to a smaller dynamic fit, which also
makes sense intuitively.

In the next theorem, we further bound the
dynamic regret.

\begin{theorem}\label{Them2}
Under Assumptions \ref{ass.0}-\ref{ass.3} and the dual variable
initialization $\bm{\lambda}_1=\mathbf{0}$, the MOSP recursion
\eqref{eq.primal}-\eqref{eq.dual} yields a dynamic regret
\begin{align}\label{Them.dyn-reg}
    {\rm Reg}^{\rm d}_T\!\leq\! \frac{R V(\{\mathbf{x}_t^*\}_{t=1}^T)}{\alpha}&+\!\|\bar{\bm{\lambda}}\|V(\{\mathbf{g}_t\}_{t=1}^T)\!+\!\frac{R^2}{2\alpha}\nonumber\\
    &+\frac{\alpha G^2T}{2}+\!\frac{\mu M^2(T+1)}{2}
\end{align}
where $V(\{\mathbf{x}_t^*\}_{t=1}^T)$ is the accumulated
variation of the per-slot minimizers $\mathbf{x}^*_t$ defined as
\begin{align}
    V(\{\mathbf{x}_t^*\}_{t=1}^T):=\sum_{t=1}^T \|\mathbf{x}^*_t-\mathbf{x}^*_{t-1}\|
\end{align}
and $V(\{\mathbf{g}_t\}_{t=1}^T)$ is the accumulated
variation of consecutive constraints
\begin{align}\label{eq.var-gt}
   \!\!\!V(\{\mathbf{g}_t\}_{t=1}^T)\!:=\!\sum_{t=1}^T\! V(\mathbf{g}_t)\!=\!\sum_{t=1}^T\max_{\mathbf{x}\in {\cal X}}\! \left\|\left[\mathbf{g}_{t+1}(\mathbf{x})\!-\!\mathbf{g}_t(\mathbf{x})\right]^+\right\|\!.\!
\end{align}
\end{theorem}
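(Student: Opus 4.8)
The plan is to run a primal--dual (saddle-point) regret analysis, but with a \emph{moving} comparison point chosen as the previous-slot minimizer $\mathbf{x}_{t-1}^*$. Because the primal update \eqref{eq.primal} is one slot delayed (it uses $f_{t-1},\mathbf{g}_{t-1}$), comparing against $\mathbf{x}_{t-1}^*$ is exactly what lets convexity of $f_{t-1}$ produce a per-slot cost gap that re-indexes into ${\rm Reg}^{\rm d}_T$ \emph{without} ever generating a cost-variation penalty; this is why no $V(\{f_t\})$ term appears in \eqref{Them.dyn-reg}.

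First I would exploit strong convexity of the primal subproblem. The objective in \eqref{eq.primal} is $1/\alpha$-strongly convex and $\mathbf{x}_t$ is its minimizer over ${\cal X}$, so for the feasible point $\mathbf{x}_{t-1}^*\in{\cal X}$ the optimality condition gives
\[
\begin{aligned}
&\nabla^{\top} f_{t-1}(\mathbf{x}_{t-1})(\mathbf{x}_t-\mathbf{x}_{t-1}^*)+\bm{\lambda}_t^{\top}\big[\mathbf{g}_{t-1}(\mathbf{x}_t)-\mathbf{g}_{t-1}(\mathbf{x}_{t-1}^*)\big]\\
&\quad\le \frac{\|\mathbf{x}_{t-1}^*-\mathbf{x}_{t-1}\|^2-\|\mathbf{x}_{t-1}^*-\mathbf{x}_t\|^2}{2\alpha}-\frac{\|\mathbf{x}_t-\mathbf{x}_{t-1}\|^2}{2\alpha}.
\end{aligned}
\]
Since $\mathbf{g}_{t-1}(\mathbf{x}_{t-1}^*)\le\mathbf{0}$ and $\bm{\lambda}_t\ge\mathbf{0}$, the term $-\bm{\lambda}_t^{\top}\mathbf{g}_{t-1}(\mathbf{x}_{t-1}^*)\ge 0$ can be dropped from the left. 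I would then split $\nabla^{\top} f_{t-1}(\mathbf{x}_{t-1})(\mathbf{x}_t-\mathbf{x}_{t-1}^*)$ into the convexity part $\nabla^{\top} f_{t-1}(\mathbf{x}_{t-1})(\mathbf{x}_{t-1}-\mathbf{x}_{t-1}^*)\ge f_{t-1}(\mathbf{x}_{t-1})-f_{t-1}(\mathbf{x}_{t-1}^*)$ and a residual $\nabla^{\top} f_{t-1}(\mathbf{x}_{t-1})(\mathbf{x}_t-\mathbf{x}_{t-1})$, the latter bounded by Young's inequality as $\ge-\alpha G^2/2-\|\mathbf{x}_t-\mathbf{x}_{t-1}\|^2/(2\alpha)$ via Assumption~\ref{ass.1}; its quadratic part cancels the proximal term above.

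The second ingredient handles the constraint/dual coupling, where the index mismatch drives the $V(\{\mathbf{g}_t\})$ term. The inequality still carries the delayed $\bm{\lambda}_t^{\top}\mathbf{g}_{t-1}(\mathbf{x}_t)$, whereas \eqref{eq.dual} only controls $\bm{\lambda}_t^{\top}\mathbf{g}_t(\mathbf{x}_t)$. I would bridge them entrywise: since $\bm{\lambda}_t\ge\mathbf{0}$, $\bm{\lambda}_t^{\top}\mathbf{g}_{t-1}(\mathbf{x}_t)\ge\bm{\lambda}_t^{\top}\mathbf{g}_t(\mathbf{x}_t)-\|\bm{\lambda}_t\|\,\|[\mathbf{g}_t(\mathbf{x}_t)-\mathbf{g}_{t-1}(\mathbf{x}_t)]^+\|\ge\bm{\lambda}_t^{\top}\mathbf{g}_t(\mathbf{x}_t)-\|\bar{\bm{\lambda}}\|\,V(\mathbf{g}_{t-1})$, invoking the dual bound $\|\bm{\lambda}_t\|\le\|\bar{\bm{\lambda}}\|$ of Theorem~\ref{Them1} and definition \eqref{eq.maxgt}. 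Then non-expansiveness of the projection in \eqref{eq.dual} yields $\bm{\lambda}_t^{\top}\mathbf{g}_t(\mathbf{x}_t)\ge(\|\bm{\lambda}_{t+1}\|^2-\|\bm{\lambda}_t\|^2)/(2\mu)-\mu M^2/2$, using $\|\mathbf{g}_t(\mathbf{x}_t)\|\le M$. Collecting these produces a per-slot bound of the form $f_{t-1}(\mathbf{x}_{t-1})-f_{t-1}(\mathbf{x}_{t-1}^*)\le(\text{proximal difference})/(2\alpha)+\alpha G^2/2-(\|\bm{\lambda}_{t+1}\|^2-\|\bm{\lambda}_t\|^2)/(2\mu)+\mu M^2/2+\|\bar{\bm{\lambda}}\|V(\mathbf{g}_{t-1})$.

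Finally I would sum over $t$ and re-index by $s=t-1$, so that the cost gaps assemble into ${\rm Reg}^{\rm d}_T$, the dual-drift sum telescopes (and is nonpositive since $\bm{\lambda}_1=\mathbf{0}$) up to an accumulated $\mu M^2(T+1)/2$, and $\sum_t V(\mathbf{g}_{t-1})$ becomes $V(\{\mathbf{g}_t\}_{t=1}^T)$. The one genuinely delicate step is the proximal telescoping, which does \emph{not} close because the comparison point $\mathbf{x}_{t-1}^*$ drifts between slots; I expect this to be the main obstacle. I would resolve it with the elementary estimate $\|\mathbf{x}_{t-1}^*-\mathbf{x}_t\|^2\ge\|\mathbf{x}_t^*-\mathbf{x}_t\|^2-2R\|\mathbf{x}_t^*-\mathbf{x}_{t-1}^*\|$ (Cauchy--Schwarz together with the bounded diameter $R$ of Assumption~\ref{ass.2}); the matched part then telescopes to at most $R^2/(2\alpha)$, while the defect accumulates to the path-length term $\frac{R}{\alpha}V(\{\mathbf{x}_t^*\}_{t=1}^T)$. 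Combining all pieces reproduces \eqref{Them.dyn-reg}, with the usual off-by-one boundary contributions absorbed into the stated $T$ and $(T+1)$ factors.
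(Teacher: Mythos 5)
Your proposal is correct and follows essentially the same route as the paper's proof: strong convexity of the primal subproblem against the (one-slot-delayed) per-slot minimizer, Young's inequality to cancel the proximal term, the dual-drift bound bridged to the delayed constraint via $V(\mathbf{g}_t)$ and the uniform dual bound of Theorem~\ref{Them1}, and the $2R\|\mathbf{x}_t^*-\mathbf{x}_{t-1}^*\|$ interpolation to repair the non-telescoping proximal sum. The only difference is a uniform shift of the time index (you analyze the update producing $\mathbf{x}_t$ where the paper analyzes the one producing $\mathbf{x}_{t+1}$), which is immaterial.
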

\begin{proof}
See Appendix \ref{app.thm2}. 	
\end{proof}
Theorem \ref{Them2} asserts that MOSP's dynamic regret is upper-bounded by a
constant depending on the accumulated variations of per-slot
minimizers and time-varying constraints as well as the primal and
dual stepsizes. While the dynamic regret in the current form
\eqref{Them.dyn-reg} is hard to grasp, the next corollary shall
demonstrate that ${\rm Reg}^{\rm d}_T$ can be very small.

Based on Theorems \ref{Them1}-\ref{Them2}, we are ready to
establish that under the mild conditions for the accumulated
variation of constraints and minimizers, the dynamic regret and
fit are sub-linearly increasing with $T$.
\begin{corollary}\label{ref.coro1}
    Under Assumptions \ref{ass.0}-\ref{ass.3} and the dual variable
initialization $\bm{\lambda}_1=\mathbf{0}$, if the primal and dual stepsizes are chosen such that $\alpha=\mu={\cal O}(T^{-\frac{1}{3}})$, then the dynamic fit is upper-bounded by
    \begin{equation}\label{eq.subfit}
      \! {\rm Fit}^{\rm d}_T\!=\!{\cal O}\!\left(\! M \!+\!\frac{2 GRT^{1/3} \!+\!{R^2T^{2/3}}/2\!+\!M^2/2}{\epsilon-\bar{V}(\mathbf{g})}\!\right)\!\!=\!{\cal O}(T^{\frac{2}{3}}).\!\!\!\!
    \end{equation}
    In addition, if the temporal variations of optimal arguments and constraints satisfy $V(\{\mathbf{x}_t^*\}_{t=1}^T)=\mathbf{o}(T^{\frac{2}{3}})$ and $V(\{\mathbf{g}_t\}_{t=1}^T)=\mathbf{o}(T^{\frac{2}{3}})$, then the dynamic regret is sub-linearly increasing, i.e.,
    \begin{equation}
         {\rm Reg}^{\rm d}_T=\mathbf{o}(T).
    \end{equation}
\end{corollary}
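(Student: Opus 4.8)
The plan is to obtain the corollary directly by substituting the prescribed stepsizes $\alpha=\mu=\mathcal{O}(T^{-1/3})$ into the bounds already established in Theorems \ref{Them1} and \ref{Them2}, and then tracking the resulting growth orders in $T$. For the dynamic fit I would plug the stepsizes into \eqref{Them.fit}: the term $2GR/\mu$ becomes $\mathcal{O}(T^{1/3})$, the term $R^2/(2\alpha\mu)$ becomes $\mathcal{O}(T^{2/3})$, while $M^2/2$ stays constant. The dominant contribution is therefore the $\mathcal{O}(T^{2/3})$ piece, which immediately yields ${\rm Fit}_T^{\rm d}=\mathcal{O}(T^{2/3})$ as claimed in \eqref{eq.subfit}; notably, no hypothesis on the variations is needed for this half.

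For the dynamic regret I would first compute how the multiplier bound $\|\bar{\bm{\lambda}}\|$ from \eqref{eq.upper-dual} scales under the chosen stepsizes, since this quantity reappears inside the regret bound \eqref{Them.dyn-reg}. Substituting $\alpha=\mu=\mathcal{O}(T^{-1/3})$ into \eqref{eq.upper-dual}, the $\mu M$ and $\mu M^2/2$ pieces vanish, the constant $2GR$ term is absorbed, and the $R^2/(2\alpha)$ piece dominates, so $\|\bar{\bm{\lambda}}\|=\mathcal{O}(T^{1/3})$. With this in hand I would examine the five summands of \eqref{Them.dyn-reg} one at a time: the stepsize-only term $R^2/(2\alpha)=\mathcal{O}(T^{1/3})$, while $\alpha G^2T/2$ and $\mu M^2(T+1)/2$ are both $\mathcal{O}(T^{-1/3}\cdot T)=\mathcal{O}(T^{2/3})$, hence all three are at most $\mathcal{O}(T^{2/3})=\mathbf{o}(T)$.

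The crux is the two variation-dependent terms. The term $RV(\{\mathbf{x}_t^*\}_{t=1}^T)/\alpha$ multiplies $\mathbf{o}(T^{2/3})$ by $1/\alpha=\mathcal{O}(T^{1/3})$, giving $\mathbf{o}(T)$; and $\|\bar{\bm{\lambda}}\|V(\{\mathbf{g}_t\}_{t=1}^T)$ multiplies the $\mathcal{O}(T^{1/3})$ bound just obtained for $\|\bar{\bm{\lambda}}\|$ by $\mathbf{o}(T^{2/3})$, again giving $\mathbf{o}(T)$. Since every summand is at most $\mathbf{o}(T)$, their sum is $\mathbf{o}(T)$, establishing ${\rm Reg}_T^{\rm d}=\mathbf{o}(T)$. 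The only real care needed is the bookkeeping around $\|\bar{\bm{\lambda}}\|$: it is tempting to treat it as a fixed constant, but under a shrinking primal stepsize it in fact grows like $T^{1/3}$, so the product with $V(\{\mathbf{g}_t\}_{t=1}^T)$ must invoke the sub-$T^{2/3}$ assumption to remain below $T$. Beyond this one point, the argument is pure order-of-growth accounting rather than any new inequality.
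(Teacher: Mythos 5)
Your proposal is correct and follows essentially the same route as the paper's own proof: substitute the stepsizes into the fit bound \eqref{Them.fit} and the regret bound \eqref{Them.dyn-reg}, note that $\|\bar{\bm{\lambda}}\|={\cal O}(\mu+1/\alpha)={\cal O}(T^{1/3})$ from \eqref{eq.upper-dual}, and track the orders of each summand. Your emphasis on not treating $\|\bar{\bm{\lambda}}\|$ as a constant is precisely the one subtle point the paper's proof also handles explicitly.
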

\begin{proof}
    Plugging $\alpha=\mu={\cal O}(T^{-\frac{1}{3}})$ into \eqref{Them.fit}, the bound in \eqref{eq.subfit} readily follows. Likewise, we have from \eqref{Them.dyn-reg} that
    \begin{align}\label{Coro.dyn-reg}
    {\rm Reg}^{\rm d}_T={\cal O}\Big( R V(\{\mathbf{x}_t^*\}_{t=1}^T)T^{\frac{1}{3}}&+\!\|\bar{\bm{\lambda}}\|V(\{\mathbf{g}_t\}_{t=1}^T)\!+\!\frac{R^2T^{\frac{1}{3}}}{2}\nonumber\\
    &+\frac{G^2T^{\frac{2}{3}}}{2}+\!\frac{ M^2T^{\frac{2}{3}}}{2}\Big).
\end{align}
Considering the upper bound on the dual iterates in
\eqref{eq.upper-dual}, it follows that
$\|\bar{\bm{\lambda}}\|={\cal O}(\mu+1/\alpha)={\cal O}(T^{\frac{1}{3}})$, which
implies that
\begin{equation}
    {\rm Reg}^{\rm d}_T\!=\!{\cal O}\left(\max\left\{V(\{\mathbf{x}_t^*\}_{t=1}^T)T^{\frac{1}{3}},\,V(\{\mathbf{g}_t\}_{t=1}^T)T^{\frac{1}{3}},\,T^{\frac{2}{3}}\right\}\right).\!\!
\end{equation}
Therefore, we deduce that ${\rm Reg}^{\rm d}_T=\mathbf{o}(T)$, if
$V(\{\mathbf{x}_t^*\}_{t=1}^T)=\mathbf{o}(T^{\frac{2}{3}})$ and
$V(\{\mathbf{g}_t\}_{t=1}^T)=\mathbf{o}(T^{\frac{2}{3}})$.
\end{proof}

Observe that the sub-linear regret and fit in Corollary \ref{ref.coro1} are achieved under a slightly ``strict'' condition that $V(\{\mathbf{x}_t^*\}_{t=1}^T)=\mathbf{o}(T^{\frac{2}{3}})$ and
$V(\{\mathbf{g}_t\}_{t=1}^T)=\mathbf{o}(T^{\frac{2}{3}})$. The next corollary shows that this condition can be further relaxed if a-priori knowledge of the time-varying environment is available. 
\begin{corollary}\label{ref.coro2}
  Consider Assumptions \ref{ass.0}-\ref{ass.3} are satisfied, and the dual variable
 is initialized as $\bm{\lambda}_1=\mathbf{0}$. If there exists a constant $\beta\in[0,1)$ such that the temporal variations satisfy $V(\{\mathbf{x}_t^*\}_{t=1}^T)=\mathbf{o}(T^{\beta})$ and $V(\{\mathbf{g}_t\}_{t=1}^T)=\mathbf{o}(T^{\beta})$, then choosing the primal and dual stepsizes as $\alpha=\mu={\cal O}(T^{\frac{\beta-1}{2}})$ leads to the dynamic fit
    \begin{equation}\label{eq.subfit2}
     {\rm Fit}^{\rm d}_T\!=\!{\cal O}\!\left(T^{\frac{1-\beta}{2}} \!+T^{1-\beta}\right)\!=\!{\cal O}(T^{1-\beta})=\mathbf{o}(T)
    \end{equation}
 and the corresponding dynamic regret
    \begin{equation}
         {\rm Reg}^{\rm d}_T=\!{\cal O}\!\left(T^{\frac{\beta+1}{2}} \!+\!T^{\frac{1-\beta}{2}}\!+\!T^{\frac{\beta+1}{2}} \right)\!\!=\!{\cal O}(T^{\frac{\beta+1}{2}})=\mathbf{o}(T).
    \end{equation}
\end{corollary}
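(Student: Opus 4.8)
The plan is to mirror the proof of Corollary~\ref{ref.coro1}: rather than deriving anything new, I would directly substitute the generalized stepsize choice $\alpha=\mu={\cal O}(T^{\frac{\beta-1}{2}})$ into the fit bound \eqref{Them.fit} of Theorem~\ref{Them1} and the regret bound \eqref{Them.dyn-reg} of Theorem~\ref{Them2}, and then carry out the asymptotic bookkeeping using the sharpened variation hypotheses $V(\{\mathbf{x}_t^*\}_{t=1}^T)=\mathbf{o}(T^{\beta})$ and $V(\{\mathbf{g}_t\}_{t=1}^T)=\mathbf{o}(T^{\beta})$. All the analytical labor is already contained in Theorems~\ref{Them1}--\ref{Them2}, so the task reduces to tracking exponents.

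For the dynamic fit, I would plug the stepsizes into \eqref{Them.fit}. The three numerator contributions scale as $2GR/\mu={\cal O}(T^{\frac{1-\beta}{2}})$, $R^2/(2\alpha\mu)={\cal O}(T^{1-\beta})$, and the constant $M^2/2$; since $\beta\in[0,1)$ gives $1-\beta>\frac{1-\beta}{2}$, the term $T^{1-\beta}$ dominates, yielding ${\rm Fit}^{\rm d}_T={\cal O}(T^{\frac{1-\beta}{2}}+T^{1-\beta})={\cal O}(T^{1-\beta})$ as claimed in \eqref{eq.subfit2}.

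For the dynamic regret, I would first record from the dual bound \eqref{eq.upper-dual} that under this stepsize choice the terms $\mu M$ and $\mu M^2/2$ vanish while $R^2/(2\alpha)={\cal O}(T^{\frac{1-\beta}{2}})$ dominates, so that $\|\bar{\bm{\lambda}}\|={\cal O}(T^{\frac{1-\beta}{2}})$. Substituting term-by-term into \eqref{Them.dyn-reg}: the minimizer-variation term $RV(\{\mathbf{x}_t^*\}_{t=1}^T)/\alpha=\mathbf{o}(T^{\beta})\cdot{\cal O}(T^{\frac{1-\beta}{2}})=\mathbf{o}(T^{\frac{\beta+1}{2}})$; the constraint-variation term $\|\bar{\bm{\lambda}}\|V(\{\mathbf{g}_t\}_{t=1}^T)={\cal O}(T^{\frac{1-\beta}{2}})\cdot\mathbf{o}(T^{\beta})=\mathbf{o}(T^{\frac{\beta+1}{2}})$; the two budget terms $\alpha G^2T/2$ and $\mu M^2(T+1)/2$ are each ${\cal O}(T^{\frac{\beta+1}{2}})$; and $R^2/(2\alpha)={\cal O}(T^{\frac{1-\beta}{2}})$ is dominated. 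Collecting gives ${\rm Reg}^{\rm d}_T={\cal O}(T^{\frac{\beta+1}{2}})$, and since $\beta<1$ implies $\frac{\beta+1}{2}<1$, this is $\mathbf{o}(T)$.

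There is no genuine analytical obstacle, since Theorems~\ref{Them1}--\ref{Them2} do the heavy lifting; the only point requiring care is to see \emph{why} the rate $\alpha=\mu={\cal O}(T^{\frac{\beta-1}{2}})$ is the correct one. It is dictated by balancing the two opposing regret contributions: the $1/\alpha$-type terms favor a large stepsize, whereas the $\alpha T$-type budget terms favor a small one. Equating $V/\alpha\asymp\alpha T$ with $V={\cal O}(T^{\beta})$ solves to $\alpha\asymp T^{\frac{\beta-1}{2}}$, exactly the prescribed rate. The remaining bookkeeping subtlety is to confirm $\frac{\beta+1}{2}\ge\frac{1-\beta}{2}$ (equivalently $\beta\ge0$), so that $T^{\frac{\beta+1}{2}}$ is indeed the dominant regret term, and that the analogous comparison $1-\beta\ge\frac{1-\beta}{2}$ pins down the dominant fit term.
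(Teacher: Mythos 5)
Your proposal is correct and follows essentially the same route the paper intends: the paper gives no separate proof for Corollary~\ref{ref.coro2}, but it is clearly meant to be the same substitution argument as Corollary~\ref{ref.coro1} (plug the stepsizes into \eqref{Them.fit}, \eqref{eq.upper-dual}, and \eqref{Them.dyn-reg} and track exponents), which is exactly what you do, including the key observation that $\|\bar{\bm{\lambda}}\|={\cal O}(1/\alpha)={\cal O}(T^{\frac{1-\beta}{2}})$. Your exponent bookkeeping matches the stated bounds; the only caveat (inherited from the corollary's statement, not introduced by your argument) is that the final claim ${\cal O}(T^{1-\beta})=\mathbf{o}(T)$ for the fit requires $\beta>0$, since at $\beta=0$ it degenerates to ${\cal O}(T)$.
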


Corollary \ref{ref.coro2} provides valuable insights for choosing optimal stepsizes in non-stationary settings.
Specifically, adjusting stepsizes to match the variability of the dynamic environment is the key to achieving the optimal performance in terms of dynamic regret and fit. Intuitively, when the variation of the environment is fast (a larger $\beta$), slowly decaying stepsizes (thus larger stepsizes) can better track the potential changes.

\begin{remark}
Theorems \ref{Them1} and \ref{Them2} are in
the spirit of the recent work
\cite{besbes2015,mokhtari2016b,zhang2016oco}, where the regret
bounds are established with respect to a dynamic benchmark in
either deterministic or stochastic settings. However, \cite{besbes2015,mokhtari2016b,zhang2016oco} do not
account for long-term and time-varying constraints, while the dynamic regret analysis is generalized here to the
setting with long-term constraints. Interesting though, sub-linear
dynamic regret and fit can be achieved when the dynamic
environment consisting of the per-slot minimizer and the
time-varying constraint \textit{does not vary on average}, that is,
$V(\{\mathbf{x}_t^*\}_{t=1}^T)$ and $V(\{\mathbf{g}_t\}_{t=1}^T)$
are sub-linearly increasing over $T$.
 \end{remark}

\subsection{Beyond dynamic regret}

Although the dynamic benchmark in \eqref{eq.dyn-reg} is more
competitive than the static one in \eqref{eq.sta-reg}, it is worth
noting that the sequence of the per-slot minimizer
$\mathbf{x}_t^*$ in \eqref{eq.realtime-prob} is not the
optimal solution to problem \eqref{eq.prob}. Defining the sequence of
optimal solutions to \eqref{eq.prob} as $\{\mathbf{x}_t^{\rm
off}\}_{t=1}^T$, it is instructive to see that computing each
minimizer $\mathbf{x}_t^*$ in \eqref{eq.realtime-prob} only
requires one-slot-ahead information (namely, $f_t(\mathbf{x})$ and
$g_t(\mathbf{x})$), while computing each $\mathbf{x}_t^{\rm off}$
within $\{\mathbf{x}_t^{\rm off}\}_{t=1}^T$ requires information over the entire time horizon (that is, $\{f_t(\mathbf{x})\}_{t=1}^T$ and $\{g_t(\mathbf{x})\}_{t=1}^T$).
For this reason, we use the subscript ``off'' in
$\{\mathbf{x}_t^{\rm off}\}_{t=1}^T$ to emphasize that this
solution comes from offline computation with information over $T$ slots.
Note that for the cases without long-term constraints \cite{besbes2015,hall2015,jadbabaie2015,mokhtari2016b}, the sequence of offline solutions $\{\mathbf{x}_t^{\rm off}\}_{t=1}^T$ coincides with the sequence of per-slot minimizers $\{\mathbf{x}_t^*\}_{t=1}^T$.


Regarding feasibility, $\{\mathbf{x}_t^{\rm off}\}_{t=1}^T$ exactly satisfies the
long-term constraint \eqref{eq.probb}, while the solution of MOSP
satisfies \eqref{eq.probb} on average under mild
conditions (cf. Corollary 1). For optimality, the cost of
the online decisions $\{\mathbf{x}_t\}_{t=1}^T$ attained by MOSP is
further benchmarked by the offline solutions $\{\mathbf{x}_t^{\rm
off}\}_{t=1}^T$. To this end, define MOSP's \emph{optimality gap} as
\begin{subequations}\label{subeq.opt-gap}
    \begin{equation}\label{eq.opt-gap}
    {\rm OptGap}^{\rm off}_T:=\sum_{t=1}^T f_t(\mathbf{x}_t)-\sum_{t=1}^T f_t(\mathbf{x}_t^{\rm off}).
\end{equation}
Intuitively, if $\{\mathbf{x}_t^{\rm
off}\}_{t=1}^T$ are close to
$\{\mathbf{x}_t^*\}_{t=1}^T$, the dynamic regret ${\rm Reg}^{\rm
d}_T$ is able to provide an accurate performance measure in the
sense of ${\rm OptGap}^{\rm off}_T$. Specifically, one can
decompose the optimality gap as
 \begin{equation}\label{eq.fac-gap}
    {\rm OptGap}^{\rm off}_T\!=\!\underbrace{\sum_{t=1}^T f_t(\mathbf{x}_t)\!-\!\sum_{t=1}^T f_t(\mathbf{x}_t^*)}_{{\cal U}_1}+\underbrace{\sum_{t=1}^T f_t(\mathbf{x}_t^*)\!-\!\sum_{t=1}^T f_t(\mathbf{x}_t^{\rm off})}_{{\cal U}_2}
\end{equation}
\end{subequations}
where ${\cal U}_1$ corresponds to the dynamic regret
${\rm Reg}^{\rm d}_T$ in \eqref{eq.dyn-reg} capturing the regret
relative to the sequence of per-slot minimizers with
one-slot-ahead information, and ${\cal U}_2$ is
the difference between the performance of per-slot minimizers and
the offline optimal solutions. Although the second term appears
difficult to quantify, we will show next that ${\cal U}_2$ is
driven by the accumulated variation of the dual functions
associated with the instantaneous problems
\eqref{eq.realtime-prob}.

To this end, consider the dual function of the instantaneous primal
problem \eqref{eq.realtime-prob},
which can be expressed by minimizing the online Lagrangian in
\eqref{eq.Lam} at time $t$, namely \cite{bertsekas1999}
\begin{align}\label{eq.slot-dual}
{\cal D}_t(\bm{\lambda}):=\min_{\mathbf{x}\in{\cal X}}\;{\cal
L}_t(\mathbf{x},\bm{\lambda})=\min_{\mathbf{x}\in{\cal
X}}\;f_t(\mathbf{x})+\bm{\lambda}^{\top}\mathbf{g}_t(\mathbf{x}).
\end{align}
Likewise, the dual function of \eqref{eq.prob} over the entire horizon is
\begin{align}\label{eq.all-dual}
{\cal D}(\bm{\lambda})&:=\min_{\{\mathbf{x}_t\in{\cal X},\forall t\}}\;\sum_{t=1}^T{\cal L}_t(\mathbf{x}_t,\bm{\lambda})\nonumber\\
&\stackrel{(a)}{=}\sum_{t=1}^T\min_{\mathbf{x}_t\in{\cal X}}{\cal L}_t(\mathbf{x}_t,\bm{\lambda})\stackrel{(b)}{=}\sum_{t=1}^T{\cal D}_t(\bm{\lambda})
\end{align}
where equality (a) holds since the minimization is separable
across the summand at time $t$, and equality (b) is due to the
definition of the per-slot dual function in \eqref{eq.slot-dual}.
As the primal problems \eqref{eq.prob} and
\eqref{eq.realtime-prob} are both convex, Slater's condition
in Assumption \ref{ass.3} implies that strong duality holds.
Accordingly, ${\cal U}_2$ in \eqref{eq.fac-gap}
can be written as
\begin{equation}\label{eq.U2-dual}
    \sum_{t=1}^T f_t(\mathbf{x}_t^*)-\sum_{t=1}^T f_t(\mathbf{x}_t^{\rm off})=\sum_{t=1}^T\;\max_{\bm{\lambda}\geq \mathbf{0}}{\cal D}_t(\bm{\lambda})-\max_{\bm{\lambda}\geq \mathbf{0}}\sum_{t=1}^T{\cal D}_t(\bm{\lambda})
\end{equation}
which is the difference between the dual objective of the static
best solution, i.e.,
$\bm{\lambda}^*\!\in\!\arg\max_{\bm{\lambda}\geq
\mathbf{0}}\!\sum_{t=1}^T\!{\cal D}_t(\bm{\lambda})$, and that of
the per-slot best solution for \eqref{eq.slot-dual}, i.e.,
$\bm{\lambda}_t^*\!\in\!\arg\max_{\bm{\lambda}\geq
\mathbf{0}}\!{\cal D}_t(\bm{\lambda})$. Leveraging this special
property of the dual problem, we next establish that ${\cal U}_2$
can be bounded by the variation of the dual function, thus
providing an estimate of the optimality gap \eqref{eq.opt-gap}.

\begin{proposition}\label{prop1}
    Define the variation of the dual function \eqref{eq.slot-dual} from time $t$ to $t+1$ as
    \begin{equation}\label{eq.var-dual}
            V({\cal D}_t):=\max_{\bm{\lambda}\geq \mathbf{0}} \|{\cal D}_{t+1}(\bm{\lambda})-{\cal D}_t(\bm{\lambda})\|
    \end{equation}
    and the total variation over the time horizon $T$ as $V(\{{\cal D}_t\}_{t=1}^T):=\sum_{t=1}^T V({\cal D}_t)$. Then
the cost difference between the best offline solution and
the best dynamic solution satisfies
    \begin{equation}
    \sum_{t=1}^T f_t(\mathbf{x}_t^*)-\sum_{t=1}^T f_t(\mathbf{x}_t^{\rm off})\leq 2T V(\{{\cal D}_t\}_{t=1}^T)
    \end{equation}
    where $\mathbf{x}_t^*$ is the minimizer of the instantaneous problem \eqref{eq.realtime-prob}, and $\mathbf{x}_t^{\rm off}$ solves \eqref{eq.prob} with all future information available. Combined with \eqref{eq.fac-gap}, it readily follows that
    \begin{equation}\label{eq.prop1}
    {\rm OptGap}^{\rm off}_T\leq {\rm Reg}^{\rm d}_T+2T V(\{{\cal D}_t\}_{t=1}^T)
    \end{equation}
    where ${\rm Reg}^{\rm d}_T$ is defined in \eqref{eq.dyn-reg}, and ${\rm OptGap}^{\rm off}_T$ in \eqref{subeq.opt-gap}.
\end{proposition}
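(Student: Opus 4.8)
The plan is to work entirely in the dual domain. Since both \eqref{eq.prob} and the per-slot problem \eqref{eq.realtime-prob} are convex and satisfy Slater's condition (Assumption \ref{ass.3}), strong duality lets me rewrite ${\cal U}_2$ as in \eqref{eq.U2-dual}, i.e.\ as a ``sum of per-slot maxima minus the maximum of the sum'',
\[
{\cal U}_2=\sum_{t=1}^T{\cal D}_t(\bm{\lambda}_t^*)-\sum_{t=1}^T{\cal D}_t(\bm{\lambda}^*),
\]
where $\bm{\lambda}_t^*\in\arg\max_{\bm{\lambda}\geq\mathbf{0}}{\cal D}_t(\bm{\lambda})$ is the per-slot dual optimum and $\bm{\lambda}^*\in\arg\max_{\bm{\lambda}\geq\mathbf{0}}\sum_{t=1}^T{\cal D}_t(\bm{\lambda})$ is the aggregate dual optimum. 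Each ${\cal D}_t$ is concave (a pointwise minimum over $\mathbf{x}\in{\cal X}$ of functions affine in $\bm{\lambda}$), and Slater's condition guarantees that these maximizers exist and are bounded, so every quantity below is well defined. Note that ${\cal U}_2\geq 0$ automatically, since ${\cal D}_t(\bm{\lambda}_t^*)\geq{\cal D}_t(\bm{\lambda}^*)$ term by term; the task is the matching upper bound.

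The key step is to replace the aggregate maximizer by a single fixed per-slot maximizer. Pick any reference index $s$ (say $s=1$). Since $\bm{\lambda}^*$ maximizes $\sum_t{\cal D}_t$, we have $\sum_t{\cal D}_t(\bm{\lambda}^*)\geq\sum_t{\cal D}_t(\bm{\lambda}_s^*)$, whence
\[
{\cal U}_2\leq\sum_{t=1}^T\big[{\cal D}_t(\bm{\lambda}_t^*)-{\cal D}_t(\bm{\lambda}_s^*)\big].
\]
I would then expand each summand by inserting ${\cal D}_s$ evaluated at both multipliers,
\begin{align*}
{\cal D}_t(\bm{\lambda}_t^*)-{\cal D}_t(\bm{\lambda}_s^*)
&=\big[{\cal D}_t(\bm{\lambda}_t^*)-{\cal D}_s(\bm{\lambda}_t^*)\big]+\big[{\cal D}_s(\bm{\lambda}_t^*)-{\cal D}_s(\bm{\lambda}_s^*)\big]\\
&\quad+\big[{\cal D}_s(\bm{\lambda}_s^*)-{\cal D}_t(\bm{\lambda}_s^*)\big],
\end{align*}
and observe that the middle bracket is nonpositive because $\bm{\lambda}_s^*$ maximizes ${\cal D}_s$. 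Only the two ``variation'' brackets survive.

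Each surviving bracket has the form ${\cal D}_t(\bm{\lambda})-{\cal D}_s(\bm{\lambda})$ at a fixed multiplier, which I bound uniformly in $\bm{\lambda}$ by telescoping over the intervening slots and invoking the definition \eqref{eq.var-dual}: since each ${\cal D}_t$ is scalar-valued, $|{\cal D}_t(\bm{\lambda})-{\cal D}_s(\bm{\lambda})|\leq\sum_r|{\cal D}_{r+1}(\bm{\lambda})-{\cal D}_r(\bm{\lambda})|\leq\sum_r V({\cal D}_r)\leq V(\{{\cal D}_t\}_{t=1}^T)$. Hence each summand is at most $2V(\{{\cal D}_t\}_{t=1}^T)$, and summing the $T$ terms gives ${\cal U}_2\leq 2T\,V(\{{\cal D}_t\}_{t=1}^T)$. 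Combining this with the decomposition \eqref{eq.fac-gap}, in which ${\cal U}_1={\rm Reg}^{\rm d}_T$ by \eqref{eq.dyn-reg}, yields the claimed \eqref{eq.prop1}.

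The main obstacle is conceptual rather than computational: recognizing that ${\cal U}_2$ is exactly the ``sum-of-maxima minus max-of-sum'' gap, and that this gap is controlled by the per-slot variation of the \emph{dual} function rather than of the primal cost. Once the dual reformulation \eqref{eq.U2-dual} is in place, the rest is the three-term insertion trick combined with the crude telescoping bound; I expect no difficulty from the fact that $s$ and $t$ may be arbitrarily far apart, since the loose factor $2T$ in the statement already absorbs the worst-case accumulation of variations over the whole horizon. The only care needed is to confirm, via Slater's condition and concavity, that the dual maximizers exist and are attained so that the insertion argument is legitimate.
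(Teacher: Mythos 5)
Your proposal is correct and follows essentially the same route as the paper: the dual reformulation \eqref{eq.U2-dual}, the replacement of the aggregate dual maximizer by a fixed per-slot maximizer $\bm{\lambda}_s^*$ (the paper's $\bm{\lambda}_{\tilde{t}}^*$), and the key per-slot bound ${\cal D}_t(\bm{\lambda}_t^*)-{\cal D}_t(\bm{\lambda}_s^*)\leq 2V(\{{\cal D}_t\}_{t=1}^T)$. The only difference is presentational: you prove that last bound directly via the three-term insertion, whereas the paper establishes the identical inequality by contradiction, both arguments resting on the same two ingredients --- the telescoped uniform bound $\|{\cal D}_{t_1}(\bm{\lambda})-{\cal D}_{t_2}(\bm{\lambda})\|\leq V(\{{\cal D}_t\}_{t=1}^T)$ and the optimality of the reference multiplier for its own slot.
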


\begin{proof}
Instead of going to the primal domain, we upper bound
${\cal U}_2$ via the dual representation in \eqref{eq.U2-dual}.
Letting $\tilde{t}$ denote any slot in ${\cal T}:=\{1,\ldots,T\}$, we
have
    \begin{align}\label{eq.proof-prop1}
&\sum_{t\in{\cal T}}\;\max_{\bm{\lambda}\geq \mathbf{0}}{\cal D}_t(\bm{\lambda})-\max_{\bm{\lambda}\geq \mathbf{0}}\sum_{t\in{\cal T}}{\cal D}_t(\bm{\lambda})\\
\leq &\sum_{t\in{\cal T}} \left({\cal D}_t(\bm{\lambda}_t^*)-{\cal D}_t(\bm{\lambda}_{\tilde{t}}^*)\right)\leq T \max_{t\in {\cal T}}\left\{{\cal D}_t(\bm{\lambda}_t^*)\!-\!{\cal D}_t(\bm{\lambda}_{\tilde{t}}^*)\right\}.\nonumber
\end{align}
The first inequality comes from the definition
$\bm{\lambda}_t^*\!\in\!\arg\max_{\bm{\lambda}\geq
\mathbf{0}}\!{\cal D}_t(\bm{\lambda})$. Note that if $\max_{t\in
{\cal T}}\{{\cal D}_t(\bm{\lambda}_t^*)\!-\!{\cal
D}_t(\bm{\lambda}_{\tilde{t}}^*)\} \leq 2 V(\{{\cal
D}_t\}_{t=1}^T)$, the proposition readily follows from
\eqref{eq.proof-prop1}. We will prove this
inequality by contradiction. Assume there exists a slot $t_0\in
{\cal T}$ such that ${\cal D}_{t_0}(\bm{\lambda}_{t_0}^*)-{\cal
D}_{t_0}(\bm{\lambda}_{\tilde{t}}^*)>2V(\{{\cal D}_t\}_{t=1}^T)$,
which implies that
    \begin{align}\label{eq.prop1-cond}
         {\cal D}_{\tilde{t}}(\bm{\lambda}_{\tilde{t}}^*)&\stackrel{(a)}{\leq} {\cal D}_{t_0}(\bm{\lambda}_{\tilde{t}}^*) + V(\{{\cal D}_t\}_{t=1}^T)\stackrel{(b)}{<} {\cal D}_{t_0}(\bm{\lambda}_{t_0}^*)- V(\{{\cal D}_t\}_{t=1}^T)\nonumber\\
         &\stackrel{(c)}{\leq} {\cal D}_{\tilde{t}}(\bm{\lambda}_{t_0}^*),\;\forall\; \tilde{t}\in {\cal T}
    \end{align}
    where inequalities (a) and (c) come from the fact that $V(\{{\cal D}_t\}_{t=1}^T)$ is the accumulated variation over $T$ slots, and hence $\max_{t_1,t_2\in{\cal T}}\|{\cal D}_{t_1}(\bm{\lambda})-{\cal D}_{t_2}(\bm{\lambda})\| \leq V(\{{\cal D}_t\}_{t=1}^T)$, while (b) is due to the hypothesis above.
    Note that ${\cal D}_{\tilde{t}}(\bm{\lambda}_{\tilde{t}}^*)<{\cal D}_{\tilde{t}}(\bm{\lambda}_{t_0}^*)$ in \eqref{eq.prop1-cond} contradicts the fact that $\bm{\lambda}_{\tilde{t}}^*$ is the maximizer of ${\cal D}_{\tilde{t}}(\bm{\lambda})$.
Therefore, we have ${\cal D}_t(\bm{\lambda}_{\tilde{t}}^*)-{\cal D}_t(\bm{\lambda}_t^*)\!\leq\! 2 V(\{{\cal D}_t\}_{t=1}^T)$, which completes the proof.
\end{proof}

The following remark provides an approach to improving the bound
in Proposition 1.

\begin{remark}
Although the optimality gap in \eqref{eq.prop1} appears to be at
least linear w.r.t. $T$, one can use the ``restarting'' trick for
dual variables, similar to that for primal variables in the
unconstrained case; see e.g., \cite{besbes2015}. Specifically, if
the total variation $V(\{{\cal D}_t\}_{t=1}^T)$ is known a-priori,
one can divide the entire time horizon ${\cal
T}:=\{1,\ldots,T\}$ into $\lceil T/\Delta_T\rceil$ sub-horizons
(each with $\Delta_T=\mathbf{o}\big(T/V(\{{\cal
D}_t\}_{t=1}^T)\big)$ slots), and restart the dual iterate
$\bm{\lambda}$ at the beginning of each sub-horizon.
By assuming that $V(\{{\cal
D}_t\}_{t=1}^T)$ is sub-linear w.r.t. $T$, one can guarantee that $\Delta_T\geq 1$ always exists.
In this case, the bound in \eqref{eq.prop1} can be improved by
\begin{equation}
{\rm OptGap}^{\rm
off}_T\leq\lceil T/\Delta_T\rceil {\rm Reg}^{\rm
d}_{\Delta_T}+2\Delta_T  V(\{{\cal D}_t\}_{t=1}^T)	
\end{equation}
where the two summands are
sub-linear w.r.t. $T$ provided that ${\rm Reg}^{\rm d}_{\Delta T}$ over each sub-horizon is sub-linear; i.e., ${\rm Reg}^{\rm
d}_{\Delta_T}=\mathbf{o}(\Delta_T)$.
Interested readers are referred
to \cite{besbes2015} for details of this restarting trick, which
are omitted here due to space limitation.
\end{remark}

\section{Application to network resource allocation}

In this section, we solve the network resource allocation problem
within the OCO framework, and present numerical experiments to
demonstrate the merits of our MOSP solver.

\subsection{Online network resource allocation}\label{subsec.ONRL}

Consider the resource allocation problem over a cloud network
\cite{chen2017tpds}, which is represented by a directed
graph ${\cal G}=({\cal I},\,{\cal E})$ with node set ${\cal I}$
and edge set ${\cal E}$, where $|{\cal I}| = I$ and $|{\cal E}| = E$.
Nodes considered here include mapping nodes collected in the set
${\cal J}=\{1,\ldots,J\}$, and data centers collected in the set
${\cal K}=\{1,\ldots,K\}$; i.e., we have ${\cal I}={\cal J}\bigcup
{\cal K}$.

\begin{figure}[t]
\centering
\includegraphics[width=0.5\textwidth]{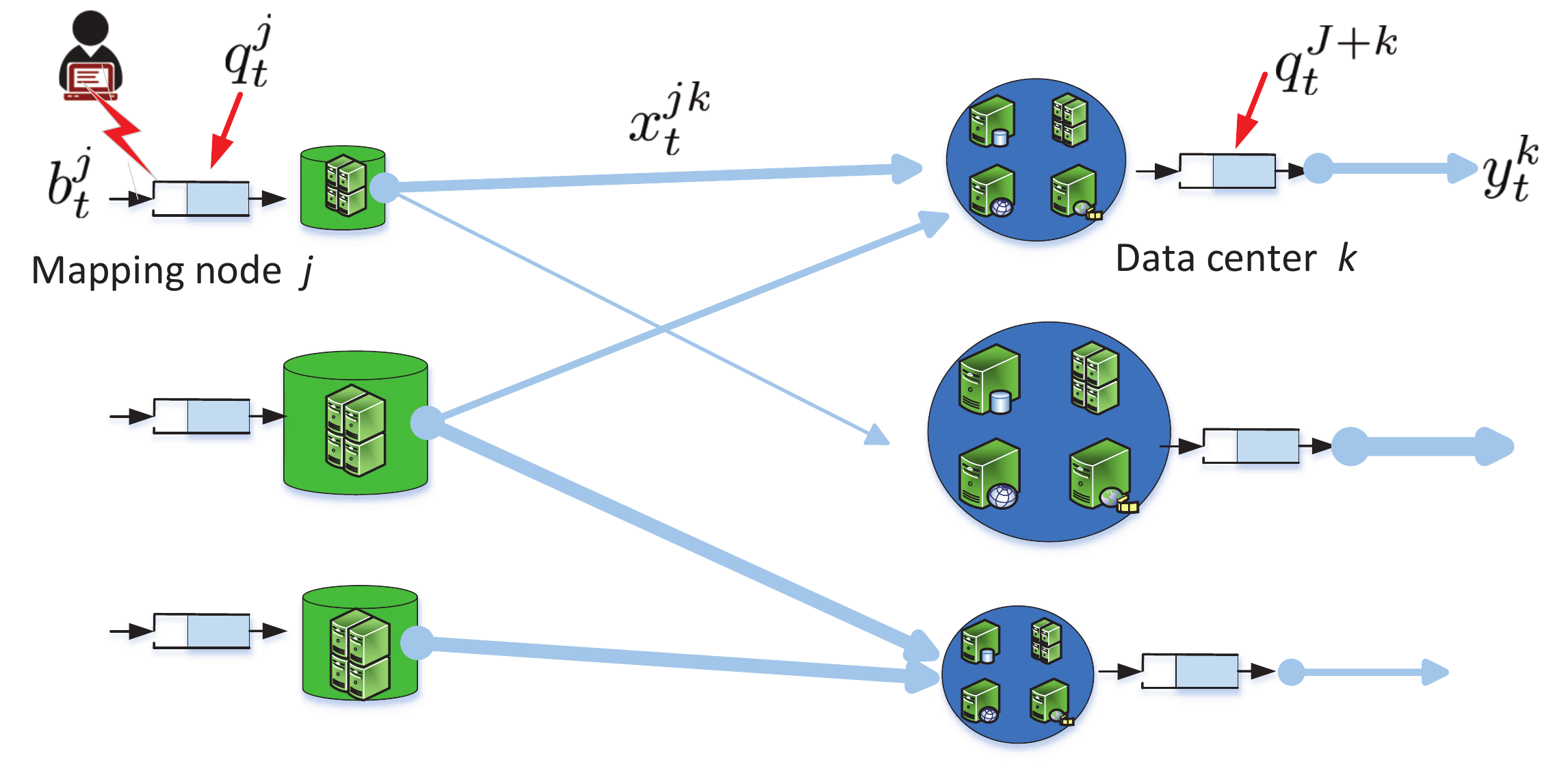}
\vspace{-0.5cm} \caption{A diagram of online network resource
allocation.
Per time $t$, mapping node $j$ has an exogenous workload
$b_t^j$ plus that stored in the queue $q_t^j$, and schedules
workload $x_t^{jk}$ to data center $k$. Data center $k$ serves an amount of workload
$y_t^k$ out of the assigned $\sum_{j=1}^J x_t^{jk}$ as well as that stored in
its queue $q_t^{J+k}$. The thickness of each edge is proportional to its capacity.} \vspace{-0.2cm} \label{fig:system}
\end{figure}

Per time $t$, each mapping node $j$ receives an exogenous
data request $b_t^j$, and forwards the amount $x_t^{jk}$ to each
data center $k$ in accordance with bandwidth availability. Each data
center $k$ schedules workload $y_t^k$ according to its resource
availability. Regarding $y_t^k$ as the weight of a virtual
outgoing edge $(k,*)$ from data center $k$, edge set ${\cal
E}:=\{(j,k),\forall j\in{\cal J},k\in{\cal K}\}\bigcup
\{(k,*),\forall k\in{\cal K}\}$ contains all the links connecting
mapping nodes with data centers, and all the ``virtual'' edges
coming out of the data centers. The $I\times E$ node-incidence
matrix is formed with the $(i,e)$-th entry
   \begin{equation}
    \mathbf{A}_{(i,e)}=
    \left\{
    \begin{array}{rl}
         {1,}~  &\text{if link $e$ enters node $i$}\\
         {-1,}~ &\text{if link $e$ leaves node $i$}\\
         {0,}~ &\text{else.}
    \end{array}
   \right.
\end{equation}
For compactness, collect the data workloads across edges
$e=(i,j)\in {\cal E}$ in a resource allocation vector
$\mathbf{x}_t:=[x_t^{11},\ldots,x_t^{JK},y_t^1,\ldots,y_t^K]^{\top}\in
\mathbb{R}^{E}_+$, and the exogenous load arrival rates of
all nodes in a vector
$\mathbf{b}_t:=[b_t^1,\ldots,b_t^J,0\ldots,0]^{\top}\in
\mathbb{R}_+^{I}$. Then, the aggregate (endogenous plus exogenous)
workloads of all nodes are given by
$\mathbf{A}\mathbf{x}_t+\mathbf{b}_t$. When the $i$-th entry of
$\mathbf{A}\mathbf{x}_t+\mathbf{b}_t$ is positive, there is
service residual at node $i$; otherwise, node $i$ over-serves the
current workload arrival.
Assume that each data center and mapping
node has a local data queue to buffer unserved workloads
\cite{neely2010}.
With $\mathbf{q}_t:=[q_t^1,\ldots,q_t^{J+K}]^{\top}$ collecting the queue lengths at each mapping node and data center, the queue update is
$\mathbf{q}_{t+1}=\left[\mathbf{q}_t+\mathbf{A}\mathbf{x}_t+\mathbf{b}_t\right]^{+}$,
where $[\,\cdot\,]^+$ ensures that the queue length is always
non-negative. The bandwidth limit of link $(j,k)$ is
$\bar{x}^{jk}$, and the resource capability of data center $k$ is
$\bar{y}^k$, which can be compactly expressed by $\mathbf{x} \in
{\cal X}$ with ${\cal X}:=\{\mathbf{0}\leq \mathbf{x}\leq
\bar{\mathbf{x}}\}$ and
$\bar{\mathbf{x}}:=[\bar{x}^{11},\ldots,\bar{x}^{JK},\bar{y}^1,\ldots,\bar{y}^K]^{\top}$.
The overall system diagram is depicted in Fig. \ref{fig:system}.

For each data center, the power cost $f_t^k
(y_t^k):=f^k(y_t^k;\theta_t^k)$ depends on a time-varying
parameter $\theta_t^k$, which captures the energy price
and the renewable generation at data center $k$ during slot
$t$. The bandwidth cost $f_t^{jk}
(x_t^{jk}):=f^{jk}(x_t^{jk};\theta_t^{jk})$ characterizes the
transmission delay and is parameterized by a time-varying scalar
$\theta_t^{jk}$. Scalars $\theta_t^k$ and $\theta_t^{jk}$ can be readily extended to vector
forms. To keep the exposition simple, we use scalars to represent
time-varying factors at nodes and edges.

Per slot $t$, the instantaneous cost
$f_t(\mathbf{x}_t)$ aggregates the costs of power consumed at all
data centers plus the bandwidth costs at all links, namely
\begin{equation}\label{eq.netcost}
f_t(\mathbf{x}_t):=\sum_{k\in{\cal
K}}~\underbrace{~~f_t^k(y_t^k)~~}_{\rm power~cost}~+~\sum_{j\in
{\cal J}}\sum_{k\in{\cal K}}~\underbrace{~~f_t^{jk}
(x_t^{jk})~~}_{\rm bandwidth~cost}
\end{equation}
where the objective can be also written as
$f_t(\mathbf{x}_t):=f(\mathbf{x}_t;\bm{\theta}_t)$ with
$\bm{\theta}_t:=[\theta_t^1,\ldots,\theta_t^K,\theta_t^{11},\ldots,\theta_t^{JK}]^{\top}$
concatenating all time-varying parameters.
Aiming to minimize the accumulated cost while serving all workloads, the optimal workload routing and allocation strategy in
this cloud network is the solution of the following optimization problem
\begin{align}\label{eq.netprob}
\min_{\{\mathbf{x}_t\in {\cal X},\forall t\}} \, \sum_{t=1}^T f_t(\mathbf{x}_t)~~~
\text{s. t.}~~&\mathbf{q}_{t+1}=\left[\mathbf{q}_t+\mathbf{A}\mathbf{x}_t+\mathbf{b}_t\right]^{+},\,\forall t\nonumber\\
&\mathbf{q}_1\geq\mathbf{0},~\mathbf{q}_{T+1}=\mathbf{0}
\end{align}
where $\mathbf{q}_1$ is the given initial queue length, and $\mathbf{q}_{T+1}=\mathbf{0}$ guarantees that all workloads arrived have been served at the end of the scheduling horizon.
Note that \eqref{eq.netprob} is time-coupled, and generally challenging to solve without information of future workload arrivals and time-varying cost functions.
Therefore, we reformulate \eqref{eq.netprob} to fit our OCO formulation \eqref{eq.prob} by
relaxing the queue recursion in \eqref{eq.netprob}, namely
\begin{equation}\label{eq.queue-relax}
    \mathbf{q}_{T+1}\geq \mathbf{q}_T+\mathbf{A}\mathbf{x}_T+\mathbf{b}_T \geq \mathbf{q}_1+\sum_{t=1}^T (\mathbf{A}\mathbf{x}_t+\mathbf{b}_t)
\end{equation}
which readily leads to $\sum_{t=1}^T
(\mathbf{A}\mathbf{x}_t+\mathbf{b}_t)\leq
\mathbf{q}_{T+1}-\mathbf{q}_1\leq \mathbf{0}$, since
$\mathbf{q}_1\geq\mathbf{0}$ and $\mathbf{q}_{T+1}=\mathbf{0}$.
Therefore, instead of solving \eqref{eq.netprob}, we aim to tackle
a relaxed problem that is in the form of OCO with long-term
constraints, given by
\begin{align}\label{eq.apply}
\min_{\{\mathbf{x}_t\in {\cal X},\forall t\}} \, &\sum_{t=1}^T f_t(\mathbf{x}_t)~~~
\text{s. t.}~~\sum_{t=1}^T \left(\mathbf{A}\mathbf{x}_t+\mathbf{b}_t\right) \leq \mathbf{0}
\end{align}
where the workload flow conservation constraint
$\mathbf{A}\mathbf{x}_t+\mathbf{b}_t\leq \mathbf{0}$ must be satisfied in the long term rather than slot-by-slot.
Clearly,
\eqref{eq.apply} is in the form of \eqref{eq.prob}. Therefore,
the MOSP algorithm of Section \ref{sec.OSP} can be
leveraged to solve \eqref{eq.apply} in an \textit{online} fashion, with
provable performance and feasibility guarantees. Specifically, with
$\mathbf{g}_t(\mathbf{x}_t)=\mathbf{A}\mathbf{x}_t+\mathbf{b}_t$,
the primal update \eqref{eq.primal} boils down to a simple
gradient update $\mathbf{x}_t\!=\!{\cal P}_{\cal
X}\left(\mathbf{x}_{t-1}-\alpha \nabla
f_{t-1}(\mathbf{x}_{t-1})-\alpha\mathbf{A}^{\top}\bm{\lambda}_t\right)$,
where ${\cal P}_{\cal X}\left(\cdot\right)$ defines projection
onto the convex set ${\cal X}$. The dual update \eqref{eq.dual} is $\bm{\lambda}_{t+1}\!=\!\big[\bm{\lambda}_t+\mu
(\mathbf{A}\mathbf{x}_t+\mathbf{b}_t)\big]^{+}\!$, which can be nicely regarded as a scaled version of the relaxed queue dynamics in \eqref{eq.netprob}, with $\mathbf{q}_t=\bm{\lambda}_t/\mu$.

In addition to simple closed-form updates, MOSP can also afford a
fully decentralized implementation by exploiting the problem
structure of network resource allocation, where each mapping node
or data center decides the amounts on all its \textit{outgoing
links}, and only exchanges information with its \textit{one-hop
neighbors}.
Per time slot $t$, the primal update at mapping node $j$ includes variables on all its outgoing links, given by
\begin{subequations}\label{eq.dist-net}
\begin{equation}\label{eq.primal-gd1}
   x_t^{jk}\!=\!\left[x_{t-1}^{jk}\!-\!\alpha \nabla f^{jk}_{t-1}(x_{t-1}^{jk})\!-\!\alpha\left(\lambda_t^k\!-\!\lambda_t^j\right)\right]_{0}^{\bar{x}^{jk}}\!\!\!,\;\forall k\in{\cal K}
\end{equation}
and the dual update reduces to
 \begin{equation}\label{eq.dual-gd1}
\lambda_{t+1}^j=\left[\lambda_t^j+\mu \left(b_t^j-\sum_{k\in{\cal
K}}x_t^{jk}\right)\right]^{+}.
\end{equation}
Likewise, for data center $k$, the primal update becomes
\begin{equation}\label{eq.primal-gd2}
    y_t^k=\left[y_{t-1}^k-\alpha \nabla f^{k}_{t-1}(y_{t-1}^k)-\alpha \sum_{j\in{\cal J}}(\lambda_t^k-\lambda_t^j)\right]_0^{\bar{y}^k}
\end{equation}
where $[\;\cdot\;]_0^{\bar{y}^k}\!\!:=\min\{\bar{y}^k,\max\{\cdot\,,0\}\}$, and the dual recursion is
\begin{equation}\label{eq.dual-gd2}
\lambda_{t+1}^k=\left[\lambda_t^k+\mu \left(\sum_{j\in{\cal
J}}x_t^{jk}-y_t^k\right)\right]^{+}.
\end{equation}
\end{subequations}
Distributed MOSP for online network resource allocation is
summarized in Algorithm \ref{algo2}.

\begin{algorithm}[t]
\caption{\!Distributed MOSP for network resource
allocation}\label{algo2}
\begin{algorithmic}[1]
\State \textbf{Initialize:} primal iterate $\mathbf{x}_0$, dual
iterate $\bm{\lambda}_1$, and proper stepsizes $\alpha$ and $\mu$.
\For {$t=1,2\dots$} \State Each mapping node $j$ performs
\eqref{eq.primal-gd1} and each data {\color{white} ccc\,}center $k$ runs \eqref{eq.primal-gd2}.
\State Mapping nodes and data centers observe local costs
{\color{white} ccc\,}and workload arrivals. \State Each mapping node $j$ performs
\eqref{eq.dual-gd1} and each data
{\color{white} ccc\,}center $k$ performs
\eqref{eq.dual-gd2}. \State Mapping nodes (data centers) send
multipliers to all
{\color{white} ccc\,}neighboring data centers (mapping nodes).
\EndFor
\end{algorithmic}
\end{algorithm}

\subsection{Revisiting stochastic dual (sub)gradient}\label{subsec.SDG}

The dynamic network resource allocation problem in Section
\ref{subsec.ONRL} has so far been studied in the stochastic
setting \cite{chen2016,chen2017tpds}. Classical approaches include
Lyapunov optimization \cite{tassiulas1992,neely2010}
and the stochastic dual (sub)gradient method
\cite{marques12}, both of which rely on stochastic
approximation (SA) \cite{robbins1951}. In the context of
stochastic optimization, the time-varying vectors $\{\bm{\xi}_t\}$
with
$\bm{\xi}_t\!:=\![\bm{\theta}_t^{\top},\mathbf{b}_t^{\top}]^{\top}$
appearing in the cost and constraint are assumed to be independent
realizations of a random variable $\bm{\Xi}$.\footnote{Extension is also available when $\{\bm{\xi}_t\}$ constitute a sample
path from an ergodic stochastic process $\{\bm{\Xi}_t\}$, which
converges to a stationary distribution; see e.g.,
\cite{duchi2012,Ale10}.} In an SA-based stochastic optimization
algorithm, per time $t$, a policy first observes a realization
$\bm{\xi}_t$ of the random variable $\bm{\Xi}$, and then
(stochastically) selects an action $\mathbf{x}_t\in {\cal X}$.
However, in contrast to minimizing the \textit{observed cost} in
the OCO setting, the goal of the stochastic resource allocation is
usually to minimize the limiting average of the \textit{expected
cost} subject to the so-termed stability constraint, namely
\begin{subequations}\label{eq.stoc-prob}
\begin{align}
\min_{\{\mathbf{x}_t\in {\cal X},\mathbf{q}_t,\forall t\}} \, &\lim_{T\rightarrow \infty}\frac{1}{T}\sum_{t=1}^T \mathbb{E}[f_t(\mathbf{x}_t)]\label{eq.stoc-proba}\\
\text{s. t.}~~&\mathbf{q}_{t+1}=\left[\mathbf{q}_t+\mathbf{A}\mathbf{x}_t+\mathbf{b}_t\right]^{+}\!,~\forall t\label{eq.stoc-probb}\\
&\lim_{T\rightarrow \infty}\frac{1}{T}\sum_{t=1}^T \mathbb{E}\left[\mathbf{q}_t\right] \leq \mathbf{0}\label{eq.stoc-probc}
\end{align}
\end{subequations}
where he expectation in \eqref{eq.stoc-proba} is taken
over $\bm{\Xi}$ and the randomness of $\mathbf{x}_t$ and $\mathbf{q}_t$ induced by all possible sample paths $\{\bm{\xi}_1,\ldots,\bm{\xi}_t\}$ via \eqref{eq.stoc-probb};
and the stability constraint \eqref{eq.stoc-probc} implies a finite bound on the accumulated constraint violation.
In contrast to the observed costs in \eqref{eq.netprob}, each decision $\mathbf{x}_t$ is evaluated by
all possible realizations in $\bm{\Xi}$ here.
However, as $\mathbf{q}_t$ in \eqref{eq.stoc-probb} couples the optimization
variables over an infinite time horizon, \eqref{eq.stoc-prob} is intractable in general.

Prior works \cite{neely2010,Geor06,marques12,chen2017tpds} have demonstrated
that \eqref{eq.stoc-prob} can
be tackled via a tractable stationary relaxation, given by
\begin{subequations}\label{eq.stoc-relax}
\begin{align}
\min_{\{\mathbf{x}_t\in {\cal X},\forall t\}} \, &\lim_{T\rightarrow \infty}\frac{1}{T}\sum_{t=1}^T \mathbb{E}[f_t(\mathbf{x}_t)]\\
\text{s. t.}~&\lim_{T\rightarrow \infty}\frac{1}{T}\sum_{t=1}^T \mathbb{E}\left[\mathbf{A}\mathbf{x}_t+\mathbf{b}_t\right] \leq \mathbf{0}\label{eq.stoc-relaxb}
\end{align}
\end{subequations}
where the time-coupling constraints \eqref{eq.stoc-probb} and
\eqref{eq.stoc-probc} are relaxed to the limiting average
constraint \eqref{eq.stoc-relaxb}.
Such a relaxation can be verified similar to the queue relaxation in \eqref{eq.queue-relax}; see also \cite{neely2010}.
Note that \eqref{eq.stoc-relax} is still challenging since it involves expectations in both costs and constraints, and the
distribution of $\bm{\Xi}$ is usually unknown. Even if the joint
probability distribution function were available, finding the
expectations would not scale with the dimensionality of $\bm{\Xi}$.
A common remedy is to use the stochastic dual gradient
(SDG) iteration (a.k.a. Lyapunov optimization) \cite{tassiulas1992,neely2010,chen2017tpds}.
Specifically, with $\bm{\lambda}\in \mathbb{R}_+^{I}$ denoting the
multipliers associated with the expectation constraint \eqref{eq.stoc-relaxb}, the SDG method first observes one realization $\bm{\xi}_t$ at each slot $t$, and then performs the dual update as
\begin{align}\label{eq.dual-stocg}
\bm{\lambda}_{t+1} ~=\big[\bm{\lambda}_t+ \mu (\mathbf{A}\mathbf{x}_t+\mathbf{b}_t)\big]^{+},\;\forall t
\end{align}
where $\bm{\lambda}_t$ is the dual iterate at time $t$, $\mathbf{A}\mathbf{x}_t+\mathbf{b}_t$ is the stochastic dual gradient, and $\mu$
is a positive (and typically constant) stepsize.
The actual allocation or the primal variable $\mathbf{x}_t$ appearing in \eqref{eq.dual-stocg} needs be found by solving the following sub-problems, one per slot $t$
\begin{align}\label{eq.SA-sub}
    \mathbf{x}_t\in\arg\min_{\mathbf{x} \in
{\cal X}}f_t(\mathbf{x})+\bm{\lambda}_t^{\top}(\mathbf{A}\mathbf{x}+\mathbf{b}_t).
\end{align}

For the considered network resource allocation problem, SDG in
\eqref{eq.dual-stocg}-\eqref{eq.SA-sub} entails a well-known
cost-delay tradeoff \cite{neely2010}. Specifically, with $f^*$
denoting the optimal objective \eqref{eq.stoc-relax}, SDG can
achieve an ${\cal O}(\mu)$-optimal solution such that
$\lim_{T\rightarrow \infty}({1}/{T}) \sum_{t=1}^{T}
\mathbb{E}\left[f_t\left(\mathbf{x}_t\right)\right] \!\leq\!
f^*\!+{\cal O}(\mu)$, and guarantee queue
lengths\footnote{According to Little's law \cite{little1961}, the time-average delay
is proportional to the time-average queue length given the
arrival rate.} satisfying $\lim_{T\rightarrow \infty}({1}/{T})
\sum_{t=1}^{T} \mathbb{E}\left[\|\mathbf{q}_t\|\right]\!=\!{\cal
O}({1}/{\mu})$. Therefore, reducing the optimality gap ${\cal O}(\mu)$
will essentially increase the average network delay ${\cal O}(1/\mu)$.


\begin{remark}
    The optimality of SDG is established relative to the offline optimal solution of \eqref{eq.stoc-relax}, which can be thought as the time-average \textit{optimality gap} in \eqref{eq.opt-gap} under the OCO setting.
    Interestingly though, the optimality gap under the stochastic setting is equivalent to the (expected) dynamic regret \eqref{eq.dyn-reg}, since their (expected) difference $ V(\{\mathbb{E}[{\cal D}_t]\}_{t=1}^T)$ in \eqref{eq.prop1} reduces to zero.
    To see this, note that $\mathbb{E}[f_t(\mathbf{x})]$ and $\mathbb{E}[\mathbf{A}\mathbf{x}+\mathbf{b}_t]$ are time-invariant, hence the dual problem of each per-slot subproblem in \eqref{eq.stoc-relax} is time-invariant.
    This reduction means that the SDG solver of the dynamic problem in \eqref{eq.stoc-prob} leverages its inherent stationarity (through the stationary dual problem), in contrast to the non-stationary nature of the OCO framework.
\end{remark}


\begin{remark}
Below we highlight several differences of the novel MOSP
in Algorithm \ref{algo2} with the SDG recursion in
\eqref{eq.dual-stocg}-\eqref{eq.SA-sub} for the dynamic network resource
allocation task.

(D1) From an operational perspective, SDG observes the current state
$\bm{\xi}_t$ first, and then performs the resource allocation decision
$\mathbf{x}_t$ accordingly. Therefore, at the beginning of
slot $t$, SDG needs to precisely know the non-causal information $\bm{\xi}_t$. Inheriting the
merits of OCO, on the other hand, MOSP operates in a fully
\textit{predictive} mode, which decides $\mathbf{x}_t$ without
knowing the cost $f_t(\mathbf{x})$ and the constraint
$\mathbf{g}_t(\mathbf{x})$ (or $\bm{\xi}_t$) at time $t$.
This feature of MOSP is of major practical importance when costs and availability of resources are not available at the point of making decisions; e.g., online demand response in smart grids \cite{kim2017} and resource allocation in wireless networking \cite{tadrous2013}.

(D2) From a computational point of view, MOSP reduces to a simple
saddle-point recursion with primal (projected) gradient descent
and dual gradient ascent for the network resource allocation
problem, both of which incur affordable complexity. However, the
primal update of SDG in \eqref{eq.SA-sub} generally requires
solving a convex program per time slot $t$, which leads to much
higher computational complexity in general.

(D3) With regards to the theoretical claims, the time-varying vector
$\bm{\xi}_t$ in SDG typically requires a rather restrictive probabilistic
assumption, to establish SDG optimality in either the
ensemble average \cite{neely2010} or the limiting
ergodic average sense \cite{Ale10}. In contrast, leveraging the
OCO framework, MOSP admits finite-sample performance analysis with non-stochastic observed costs and constraints, which can even be adversarial.
\end{remark}

\subsection{Numerical experiments}

In this section, we provide numerical tests to demonstrate the
merits of the proposed MOSP algorithm in the application of dynamic
network resource allocation. Consider the geographical workload
routing and allocation task in \eqref{eq.apply} with $J=10$
mapping nodes and $K=10$ data centers. The instantaneous
network cost in \eqref{eq.netcost} is
\begin{equation}
f_t(\mathbf{x}_t)\!:=\sum_{k\in{\cal K}}p_t^k (y_t^k)^2+\sum_{j\in {\cal J}}\sum_{k\in{\cal K}}c^{jk}
(x_t^{jk})^2
\end{equation}
where $p_t^k$ is the energy price at data center $k$ at time $t$,
and $c^{jk}$ is the per-unit bandwidth cost for transmitting from
mapping node $j$ to data center $k$. With the bandwidth limit
$\bar{x}^{jk}$ uniformly randomly generated within $[10,100]$, we
set the bandwidth cost of each link $(j,k)$ as
$c^{jk}=40/\bar{x}^{jk},\forall j,k$. The resource capacities
$\{\bar{y}^k,\forall k\}$ at all data centers are uniformly
randomly generated from $[100,200]$. We consider the following two
cases for the time-varying parameters $\{p_t^k,\forall t,k\}$ and
$\{b_t^j,\forall t,j\}$:

  \textbf{Case 1)} Parameters $\{p_t^k,\forall t,k\}$ and $\{b_t^j,\forall t,j\}$ are independently drawn from time-invariant distributions. Specifically, $p_t^k$ is uniformly distributed over $[1,3]$, and the delay-tolerant workload $b_t^j$ arrives at each mapping node $j$ according to a uniform distribution over $[50,150]$.

  \textbf{Case 2)} Parameters $\{p_t^k,\forall t,k\}$ and $\{b_t^j,\forall t,j\}$ are generated according to non-stationary stochastic processes. Specifically, $p_t^k=\sin(\pi t/12)+n_t^k$ with i.i.d. noise $n_t^k$ uniformly distributed over $[1,3]$, while $b_t^j=50\sin(\pi t/12)+v_t^j$ with i.i.d. noise $v_t^j$ uniformly distributed over $[99,101]$.

Finally, with time horizon $T=500$, the stepsize in \eqref{eq.primal-gd1} and \eqref{eq.primal-gd2} is
set to $\alpha=0.05/T^{1/3}$, and for \eqref{eq.dual-gd1} and \eqref{eq.dual-gd2} to $\mu=50/T^{1/3}$.
MOSP is benchmarked by three strategies: SDG in Section
\ref{subsec.SDG}, the sequence of per-slot best minimizers in
\eqref{eq.slot-opt}, and the offline optimal solution that solves
\eqref{eq.prob} at once with all future costs and constraints available.
Note that at the beginning of each slot $t$, the exact prices
$\{p_t^k,\forall k\}$ and demands $\{b_t^j,\forall j\}$ for the
coming slot are generally not available in practice \cite{lmp,kim2017,tadrous2013,huang2016}.
Since the original SDG updates
\eqref{eq.dual-stocg} and \eqref{eq.SA-sub} require non-causal
knowledge of $\{p_t^k,\forall k\}$ and $\{b_t^j,\forall j\}$ to
decide $\mathbf{x}_t$, we modify them for fairness in this online setting by using the prices and demands at slot $t-1$
to obtain $\mathbf{x}_t$. In this case that we we term online dual gradient (ODG), the performance guarantee of SDG may not hold.
Nevertheless, as shown in the next, different constant stepsizes
for ODG's dual update in \eqref{eq.dual-stocg} still lead to quite different performance and feasibility behaviors. For this reason,
ODG is studied under stepsizes $\mu_{\rm ODG}=0.5$ and
$1$.

\begin{figure}[t]
\centering
\vspace{-0.2cm}
\includegraphics[height=0.3\textwidth]{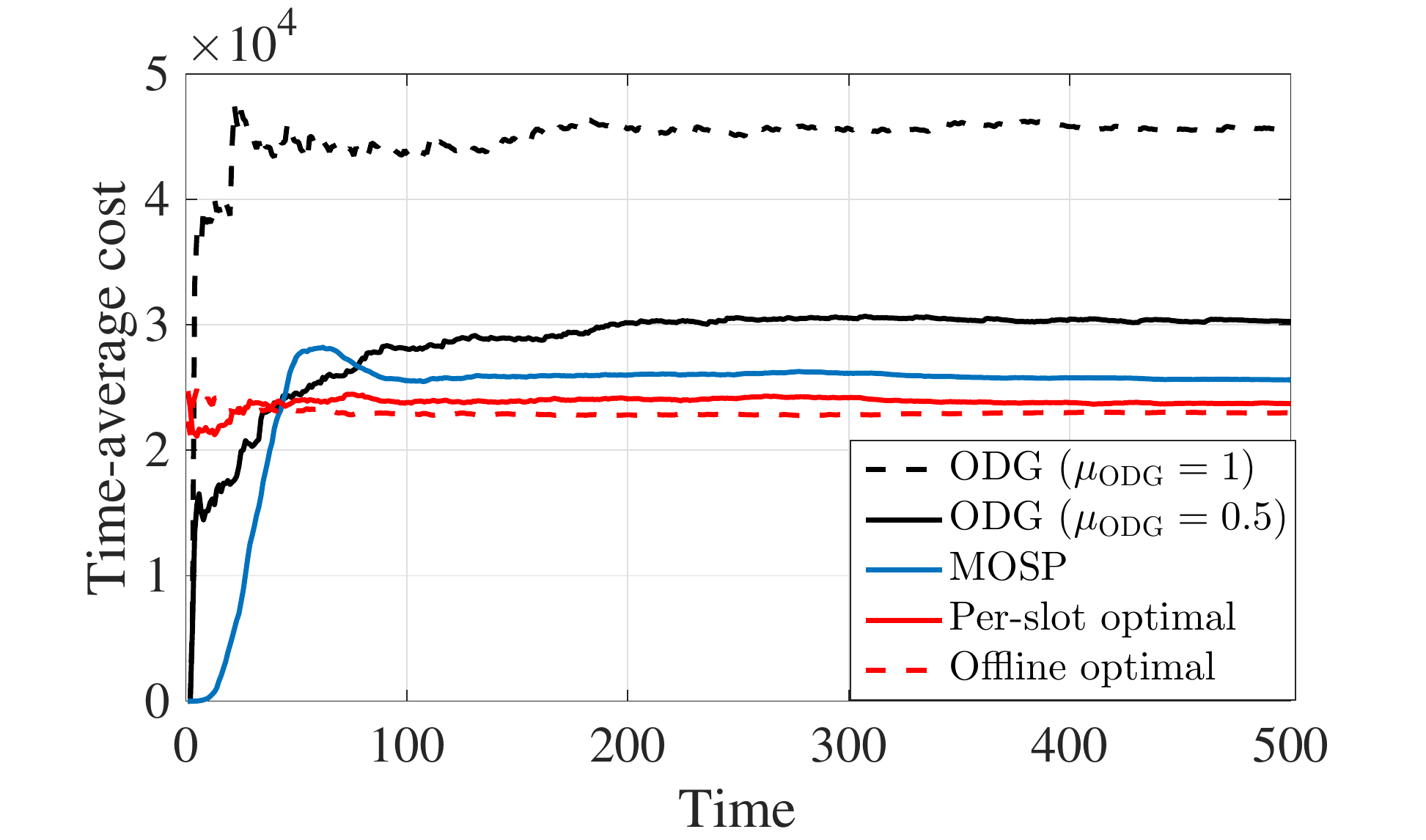}
\vspace{-0.5cm}
\caption{Time-average cost for Case 1.}
\label{Fig.cost1}
\end{figure}

\begin{figure}[t]
\centering
\vspace{-0.2cm}
\includegraphics[height=0.3\textwidth]{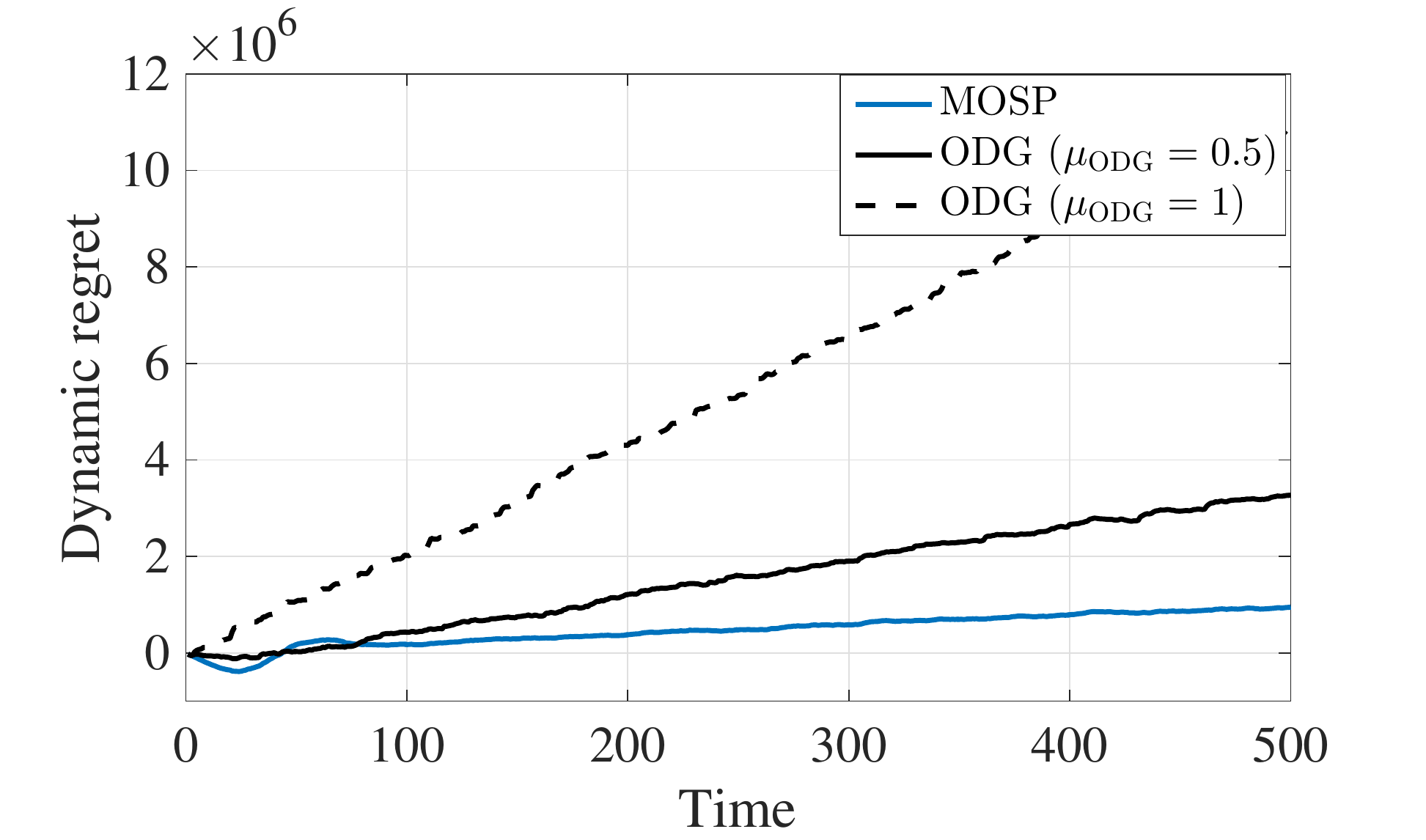}
\vspace{-0.5cm}
\caption{Dynamic regret for Case 1.}
\label{Fig.reg1}
\end{figure}

\begin{figure}[t]
\centering
\vspace{-0.2cm}
\includegraphics[height=0.3\textwidth]{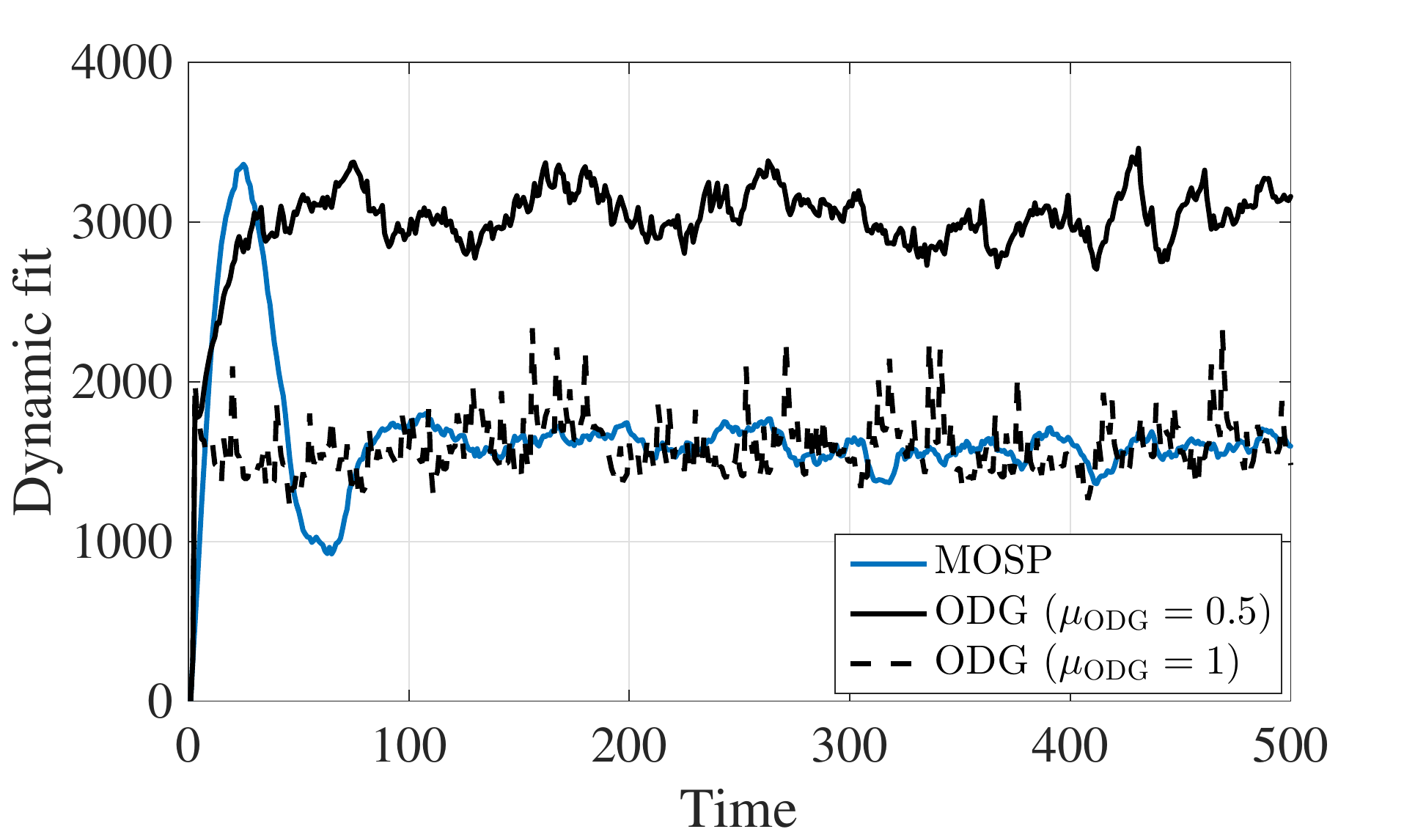}
\vspace{-0.5cm}
\caption{Dynamic fit for Case 1.}
\label{Fig.fit1}
\vspace{-0.2cm}
\end{figure}

\begin{figure}[t]
\centering
\vspace{-0.2cm}
\includegraphics[height=0.3\textwidth]{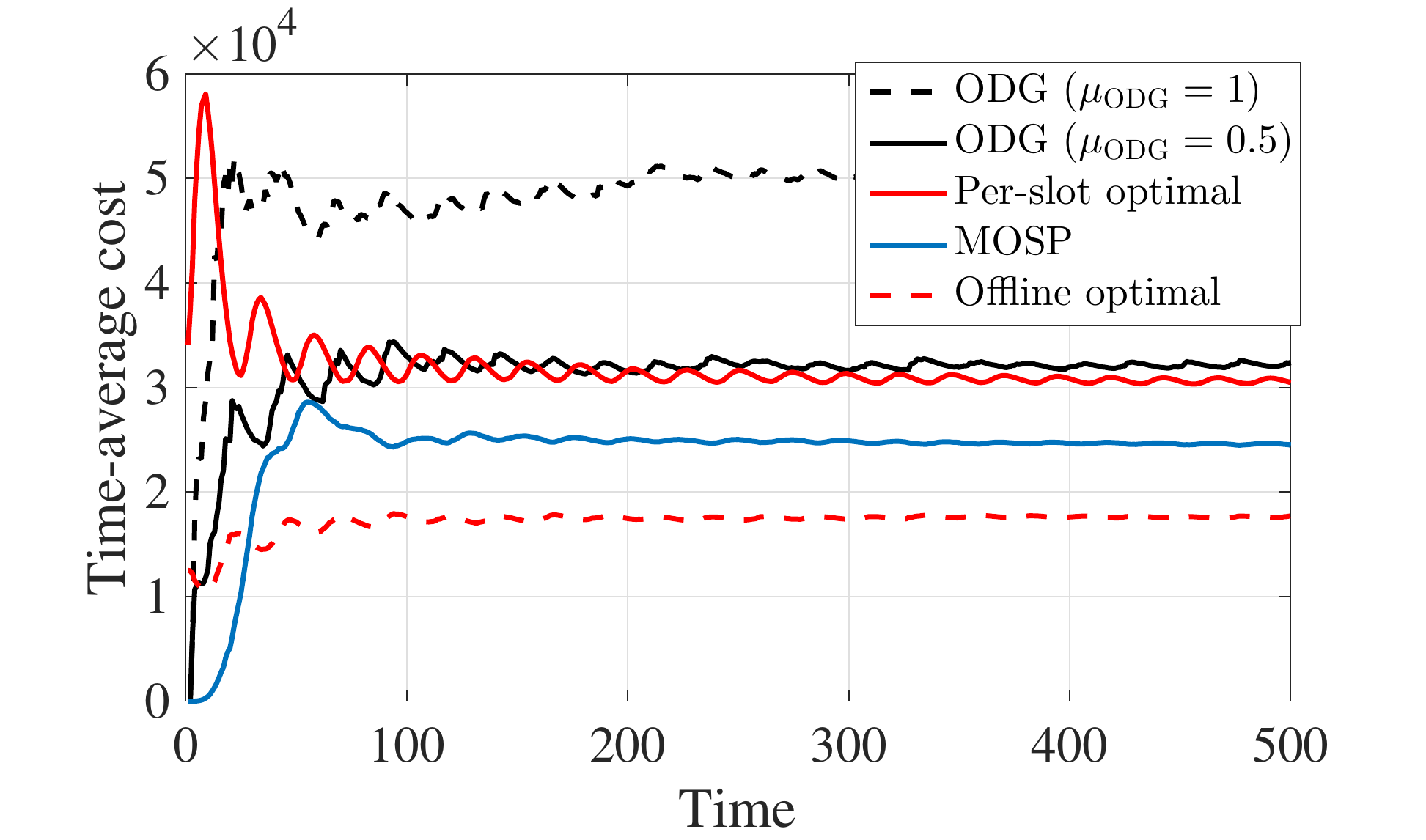}
\vspace{-0.5cm}
\caption{Time-average cost for Case 2.}
\label{Fig.cost2}
\vspace{-0.2cm}
\end{figure}

Figs. \ref{Fig.cost1}-\ref{Fig.fit1} show the test results for
Case 1 under i.i.d. costs and constraints. Clearly, MOSP in Fig.
\ref{Fig.cost1} converges to a smaller time-average cost than ODG
with the two stepsizes. The time-average cost of MOSP is
slightly higher than the per-slot optimal solution, as well as the
offline optimal solution with all information of the costs and
constraints available over horizon $T$. Fig. \ref{Fig.reg1} confirms the
conclusion made from Fig. \ref{Fig.cost1}, where the dynamic
regret (cf. \eqref{eq.dyn-reg}) of MOSP grows much slower than
that of ODG. Regarding the dynamic fit (cf. \eqref{eq.dyn-fit}),
Fig. \ref{Fig.fit1} demonstrates that ODG with $\mu_{\rm ODG}=1$
has a smaller fit than that of $\mu_{\rm ODG}=0.5$, and similar to
the dynamic fit of MOSP. According to the well-known trade-off
between cost (optimality) and delay (constraint violations) in
\cite{neely2010}, increasing $\mu_{\rm ODG}$ will improve
the dynamic fit of ODG but degrade its dynamic regret. Therefore,
MOSP is favorable in Case 1 since it has much smaller regret when
its dynamic fit is similar to that of ODG with $\mu_{\rm ODG}=1$.
It is worth mentioning that theoretically speaking, the dynamic
regret of MOSP may not be sub-linear in this i.i.d. case, since
the accumulated cost and constraint variation is not necessarily
small enough (cf. Theorem \ref{Them2}). However, MOSP is robust in this aspect at least for the numerical tests we carried.

\begin{figure}[t]
\centering
\vspace{-0.2cm}
\includegraphics[height=0.3\textwidth]{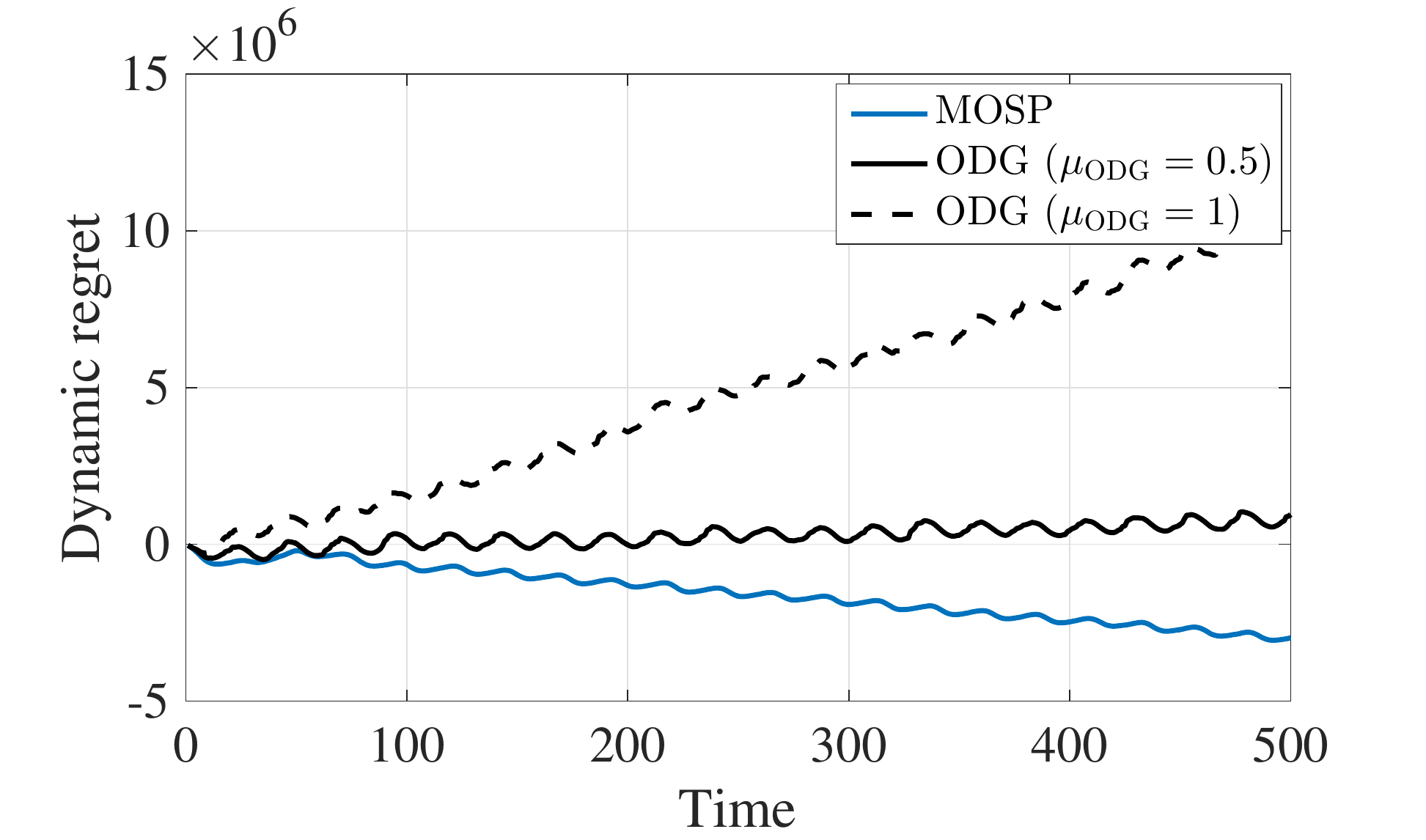}
\vspace{-0.5cm}
\caption{Dynamic regret for Case 2.}
\label{Fig.reg2}
\end{figure}

\begin{figure}[t]
\centering
\vspace{-0.2cm}
\includegraphics[height=0.3\textwidth]{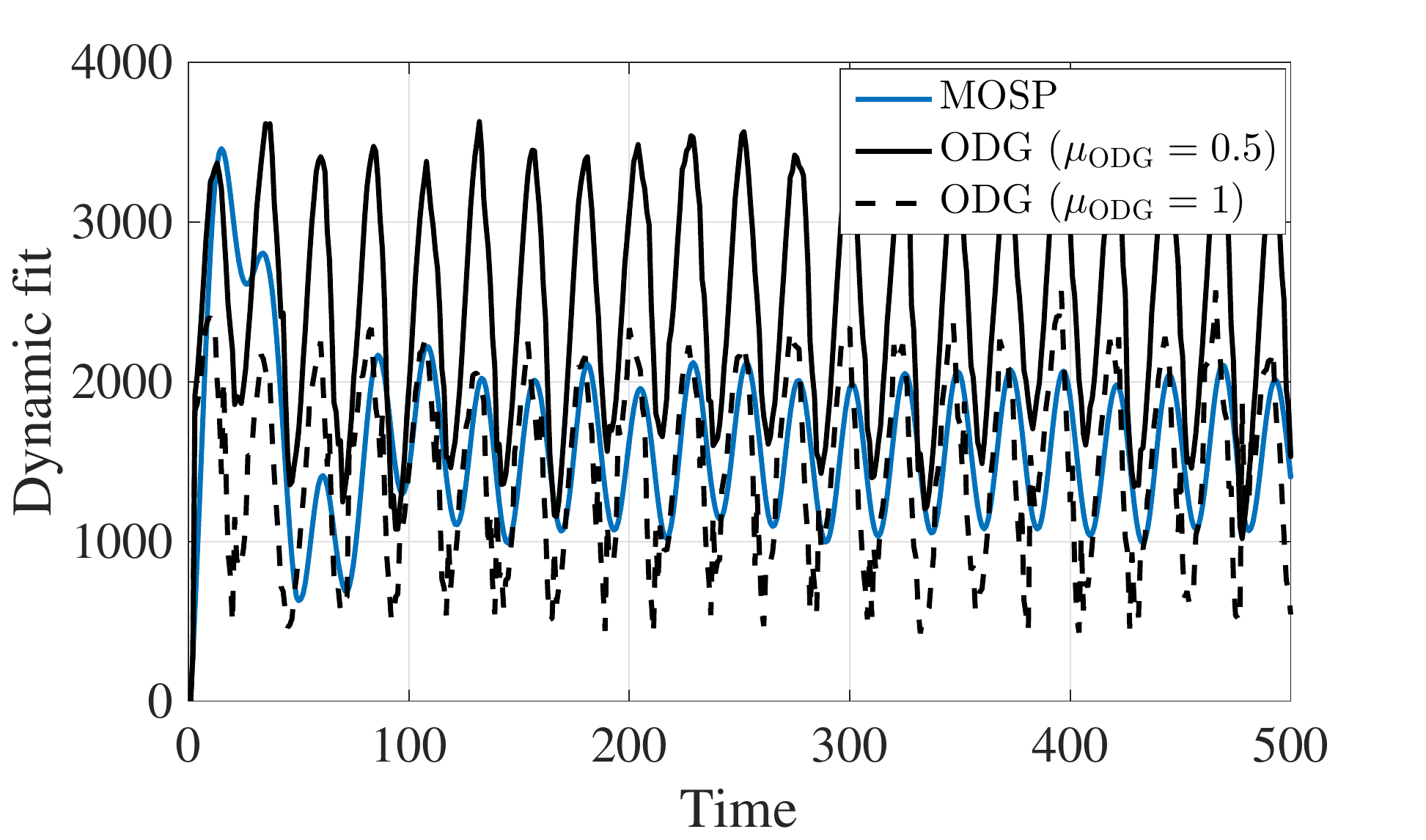}
\vspace{-0.5cm}
\caption{Dynamic fit for Case 2.}
\label{Fig.fit2}
\end{figure}

Simulation tests using non-stationary costs and constraints are
shown in Figs. \ref{Fig.cost2}-\ref{Fig.fit2}. Different from Case 1, the time-average cost of MOSP is not only smaller than
ODG, but also smaller than the per-slot optimum obtained via
\eqref{eq.slot-opt}; see Fig. \ref{Fig.cost2}. A similar
conclusion can be also drawn through the growths of dynamic
regret in Fig. \ref{Fig.reg2}.
From a high level, this is because the difference between the cost of the per-slot minimizers and that of the offline solutions is no longer small in the non-stationary case.
Regarding Fig. \ref{Fig.fit2}, both
ODG and MOSP have finite dynamic fits in the sense that the
accumulated constraint violations do not increase with time. The
dynamic fit of MOSP is much smaller than that of ODG with
$\mu_{\rm ODG}=0.5$, and comparable to that of ODG with $\mu_{\rm
ODG}=1$. Therefore, in this non-stationary case, MOSP also
significantly outperforms ODG in both dynamic regret and
fit.

\section{Concluding Remarks}

OCO with both adversarial costs and constraints has been studied
in this paper. Different from existing works, the focus is on a
setting where some of the constraints are revealed after taking
actions, they are tolerable to instantaneous violations, but must be
satisfied on average. Performance of the novel OCO algorithm is measured
by: i) the difference of its objective relative to the best
dynamic solution with one-slot-ahead information of the cost and
the constraint (dynamic regret); and, ii) its accumulated amount of
constraint violations (dynamic fit). It has been shown that the
proposed MOSP algorithm adapts to the considered OCO setting with
adversarial constraints.
Under standard assumptions, MOSP simultaneously
yields sub-linear dynamic regret and fit, if the accumulated
variations of the per-slot minimizers and adversarial constraints
are sub-linearly increasing with time. Algorithm design and
performance analysis in this novel OCO setting, under adversarial
constraints and with a dynamic benchmark, broaden the
applicability of OCO to a wider application regime, which includes dynamic network resource allocation and online demand response in smart grids. 
Numerical tests demonstrated that the proposed algorithm
outperforms state-of-the-art alternatives under different scenarios.

\section*{Acknowledgement}
The authors would like to thank Prof. Xin Wang for
helpful discussions and comments.


\appendix

Before proving Theorems \ref{Them1} and \ref{Them2}, we first
bound the variation of the dual variable for the MOSP recursion
\eqref{eq.primal}-\eqref{eq.dual}. With the dual drift defined as
$\Delta(\bm{\lambda}_t):=\left(\|\bm{\lambda}_{t+1}\|^2-\|\bm{\lambda}_t\|^2\right)/2$,
we have the following lemma.
\begin{lemma}\label{Lemma1}
Per slot $t$, the dual drift of the MOSP recursion
\eqref{eq.primal}-\eqref{eq.dual} is upper-bounded as
\begin{align}\label{eq.lemma1}
    \Delta(\bm{\lambda}_t)\leq \mu\bm{\lambda}_t^{\top}\mathbf{g}_t(\mathbf{x}_t)+\frac{\mu^2}{2}\|\mathbf{g}_t(\mathbf{x}_t)\|^2.
\end{align}
\end{lemma}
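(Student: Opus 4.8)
The plan is to exploit the non-expansiveness of the Euclidean projection onto the non-negative orthant, which is exactly the operator $[\,\cdot\,]^+$ appearing in the dual update \eqref{eq.dual}. Since $\bm{\lambda}_{t+1}=[\bm{\lambda}_t+\mu\mathbf{g}_t(\mathbf{x}_t)]^+$ is the projection of the point $\bm{\lambda}_t+\mu\mathbf{g}_t(\mathbf{x}_t)$ onto a closed convex set containing the origin, I would first reduce the whole drift to a single squared-norm comparison, and then finish with a routine expansion of the inner product. There is essentially no deep step here; the argument is purely algebraic once the right projection inequality is in place.

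Concretely, I would begin by recording the key fact that $\big\|[\mathbf{a}]^+\big\|\le\|\mathbf{a}\|$ for every $\mathbf{a}$. This follows because $[\,\cdot\,]^+$ is the projection onto the non-negative orthant, a closed convex set that contains $\mathbf{0}$ and therefore fixes it, so that $[\mathbf{0}]^+=\mathbf{0}$; applying the standard $1$-Lipschitz property of projections onto convex sets to the pair $\mathbf{a}$ and $\mathbf{0}$ gives $\big\|[\mathbf{a}]^+-[\mathbf{0}]^+\big\|\le\|\mathbf{a}-\mathbf{0}\|$, i.e. $\big\|[\mathbf{a}]^+\big\|\le\|\mathbf{a}\|$. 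Specializing to $\mathbf{a}=\bm{\lambda}_t+\mu\mathbf{g}_t(\mathbf{x}_t)$ yields
\begin{equation}\label{eq.plan-nonexp}
\|\bm{\lambda}_{t+1}\|^2=\big\|[\bm{\lambda}_t+\mu\mathbf{g}_t(\mathbf{x}_t)]^+\big\|^2\le\big\|\bm{\lambda}_t+\mu\mathbf{g}_t(\mathbf{x}_t)\big\|^2.
\end{equation}

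From \eqref{eq.plan-nonexp}, I would expand the right-hand side as $\|\bm{\lambda}_t\|^2+2\mu\,\bm{\lambda}_t^{\top}\mathbf{g}_t(\mathbf{x}_t)+\mu^2\|\mathbf{g}_t(\mathbf{x}_t)\|^2$, subtract $\|\bm{\lambda}_t\|^2$ from both sides, and divide by two. Recalling the definition $\Delta(\bm{\lambda}_t)=\big(\|\bm{\lambda}_{t+1}\|^2-\|\bm{\lambda}_t\|^2\big)/2$, this gives precisely the claimed bound \eqref{eq.lemma1}. The nearest thing to an obstacle is the justification of the projection inequality $\big\|[\mathbf{a}]^+\big\|\le\|\mathbf{a}\|$; but since the orthant is convex and contains the origin, this is immediate, so the remainder is a one-line expansion. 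No convexity of $f_t$ or $\mathbf{g}_t$, and no property of the primal step \eqref{eq.primal}, is needed for this lemma: it is a self-contained statement about the dual recursion alone.
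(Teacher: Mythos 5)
Your proof is correct and follows essentially the same route as the paper: square the dual update, bound $\bigl\|[\bm{\lambda}_t+\mu\mathbf{g}_t(\mathbf{x}_t)]^+\bigr\|^2\leq\|\bm{\lambda}_t+\mu\mathbf{g}_t(\mathbf{x}_t)\|^2$, expand, and divide by two. The only difference is cosmetic---you justify the projection inequality via non-expansiveness of the projection onto the non-negative orthant, whereas the paper invokes it without comment---so nothing further is needed.
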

\begin{proof}
    Squaring the dual variable update \eqref{eq.dual}, we have
    \begin{align}\label{eq.dualdrift}
        \|\bm{\lambda}_{t+1}\|^2 & = \left\|\big[\bm{\lambda}_t+\mu \mathbf{g}_t(\mathbf{x}_t)\big]^{+}\right\|^2  \leq \left\|\bm{\lambda}_t+\mu \mathbf{g}_t(\mathbf{x}_t)\right\|^2 \nonumber \\
                                 & = \|\bm{\lambda}_t\|^2+2\mu\bm{\lambda}_t^{\top}\mathbf{g}_t(\mathbf{x}_t)+\mu^2\|\mathbf{g}_t(\mathbf{x}_t)\|^2.
    \end{align}
    And the proof is complete after rearranging terms and dividing both sides by 2.
\end{proof}

\subsection{Proof of Theorem \ref{Them1}}\label{app.thm1}
The proof follows the steps in \cite[Theorem 7]{yu2016}, but
generalizes the result from static regret with time-invariant
constraints to dynamic regret with time-varying and long-term
constraints. Recall that the primal iterate $\mathbf{x}_{t+1}$ is
the optimal solution to the following optimization problem (cf.
\eqref{eq.primal})
\begin{equation*}
	    \min_{\mathbf{x}\in {\cal X}}\; \nabla^{\top}
    f_{t}(\mathbf{x}_t)\!\left(\mathbf{x}-\mathbf{x}_t\right)+\bm{\lambda}_{t+1}^{\top}\mathbf{g}_t(\mathbf{x})+\frac{1}{2\alpha}\|\mathbf{x}-\mathbf{x}_t\|^2.
\end{equation*}
Then for any interior point $\tilde{\mathbf{x}}_t\in {\cal X}$ in Assumption \ref{ass.3}, it follows that
\begin{align}\label{eq.29}
    &\nabla^{\top} f_{t}(\mathbf{x}_t)\!\left(\mathbf{x}_{t+1}\!-\!\mathbf{x}_t\right)+\bm{\lambda}_{t+1}^{\top}\mathbf{g}_t(\mathbf{x}_{t+1})\!+\!\frac{1}{2\alpha}\|\mathbf{x}_{t+1}\!-\!\mathbf{x}_t\|^2\nonumber\\
\leq &  \nabla^{\top} f_{t}(\mathbf{x}_t)\!\left(\tilde{\mathbf{x}}_t\!-\!\mathbf{x}_t\right)+\bm{\lambda}_{t+1}^{\top}\mathbf{g}_t(\tilde{\mathbf{x}}_t)+\frac{1}{2\alpha}\|\tilde{\mathbf{x}}_t\!-\!\mathbf{x}_t\|^2\nonumber\\
\stackrel{(a)}{\leq} & \nabla^{\top} f_{t}(\mathbf{x}_t)\!\left(\tilde{\mathbf{x}}_t\!-\!\mathbf{x}_t\right)-\epsilon\bm{\lambda}_{t+1}^{\top}\mathbf{1}+\frac{1}{2\alpha}\|\tilde{\mathbf{x}}_t\!-\!\mathbf{x}_t\|^2\nonumber\\
\stackrel{(b)}{\leq} &\nabla^{\top} f_{t}(\mathbf{x}_t)\!\left(\tilde{\mathbf{x}}_t\!-\!\mathbf{x}_t\right)-\epsilon\|\bm{\lambda}_{t+1}\|+\frac{1}{2\alpha}\|\tilde{\mathbf{x}}_t\!-\!\mathbf{x}_t\|^2
\end{align}
where (a) follows by choosing
$\tilde{\mathbf{x}}_t$ such that
$\mathbf{g}_t(\tilde{\mathbf{x}}_t) \leq -\epsilon\mathbf{1}$ and recalling the
non-negativity of $\bm{\lambda}_{t+1}$;
inequality (b) is because $\|\bm{\lambda}_{t+1}\|\leq
\bm{\lambda}_{t+1}^{\top}\mathbf{1}$ holds for any
non-negative vector $\bm{\lambda}_{t+1}$.

Rearranging terms in \eqref{eq.29}, it follows that
\begin{align}\label{eq.crossbound}
   & \bm{\lambda}_{t+1}^{\top}\mathbf{g}_t(\mathbf{x}_{t+1})\leq\nabla^{\top} f_{t}(\mathbf{x}_t)\!\left(\tilde{\mathbf{x}}_t-\mathbf{x}_t\right)\!-\!\nabla^{\top} f_{t}(\mathbf{x}_t)\!\left(\mathbf{x}_{t+1}-\mathbf{x}_t\right)\nonumber\\
   &\qquad\qquad\qquad\quad-\epsilon\|\bm{\lambda}_{t+1}\|\!+\!\frac{1}{2\alpha}\|\tilde{\mathbf{x}}_t-\mathbf{x}_t\|^2\!-\!\frac{1}{2\alpha}\|\mathbf{x}_{t+1}-\mathbf{x}_t\|^2\nonumber\\
   & \stackrel{(c)}{\leq}\! \nabla^{\top} f_{t}(\mathbf{x}_t)\!\left(\tilde{\mathbf{x}}_t-\mathbf{x}_t\right)\!-\!\nabla^{\top} f_{t}(\mathbf{x}_t)\!\left(\mathbf{x}_{t+1}-\mathbf{x}_t\right)\!-\!\epsilon\|\bm{\lambda}_{t+1}\|\!+\!\frac{R^2}{2\alpha}\nonumber\\
   &\stackrel{(d)}{\leq}\!\! \|\nabla f_{t}(\mathbf{x}_t)\|\|\tilde{\mathbf{x}}_t\!-\!\mathbf{x}_t\|\!+\!\|\nabla f_{t}(\mathbf{x}_t)\|\|\mathbf{x}_{t+1}\!-\!\mathbf{x}_t\|\!-\!\epsilon\|\bm{\lambda}_{t+1}\|\!+\!\frac{R^2}{2\alpha}\nonumber\\
   & \stackrel{(e)}{\leq} 2GR-\epsilon\|\bm{\lambda}_{t+1}\|+\frac{R^2}{2\alpha}
\end{align}
where (c) holds since ${\cal X}$ confines $\|\tilde{\mathbf{x}}_t-\mathbf{x}_t\|^2\leq R^2$ and
$\|\mathbf{x}_{t+1}-\mathbf{x}_t\|^2\geq 0$; (d)
uses the Cauchy-Schwartz inequality twice; (e) leverages the bounds in
Assumption \ref{ass.2}, namely, $\|\nabla
f_{t}(\mathbf{x}_t)\|\leq G$,
$\|\tilde{\mathbf{x}}_t-\mathbf{x}_t\|\leq R$, and
$\|\mathbf{x}_{t+1}\!-\!\mathbf{x}_t\|\leq R$.

Plugging \eqref{eq.crossbound} into \eqref{eq.dualdrift} in Lemma
\ref{Lemma1}, we have
\begin{align}\label{eq.drift1}
    &\Delta(\bm{\lambda}_{t+1}) \leq \mu\bm{\lambda}_{t+1}^{\top}\mathbf{g}_{t+1}(\mathbf{x}_{t+1})+\frac{\mu^2}{2}\|\mathbf{g}_{t+1}(\mathbf{x}_{t+1})\|^2\nonumber\\
     &\stackrel{(f)}{\leq} \mu\bm{\lambda}_{t+1}^{\top}\big(\mathbf{g}_{t+1}(\mathbf{x}_{t+1})-\mathbf{g}_t(\mathbf{x}_{t+1})\big)-\epsilon\mu\|\bm{\lambda}_{t+1}\|\nonumber\\
     &\qquad\qquad\qquad\qquad\qquad\qquad\qquad~~+2\mu GR+\frac{\mu R^2}{2\alpha}+\frac{\mu^2 M^2}{2}\nonumber\\
     &\stackrel{(g)}{\leq}\mu\bm{\lambda}_{t+1}^{\top}\big[\mathbf{g}_{t+1}(\mathbf{x}_{t+1})-\mathbf{g}_t(\mathbf{x}_{t+1})\big]^+\!\!\!-\epsilon\mu\|\bm{\lambda}_{t+1}\|\nonumber\\
     &\qquad\qquad\qquad\qquad\qquad\qquad\qquad~~+2\mu GR+\frac{\mu R^2}{2\alpha}+\frac{\mu^2 M^2}{2}\nonumber\\
     &\stackrel{(h)}{\leq}\mu\bar{V}(\mathbf{g})\|\bm{\lambda}_{t+1}\|\!-\!\epsilon\mu\|\bm{\lambda}_{t+1}\|\!+\!2\mu GR\!+\!\frac{\mu R^2}{2\alpha}\!+\!\frac{\mu^2 M^2}{2}\!\!\!
\end{align}
where (f) uses the upper bound in Assumption \ref{ass.1} such that
$\|\mathbf{g}_{t+1}(\mathbf{x}_{t+1})\|\leq M$, (g) holds since
$\bm{\lambda}_{t+1}\geq \mathbf{0}$, and (h) follows from the
Cauchy-Schwartz inequality and the definition of the maximum variation
$\bar{V}(\mathbf{g})$ in \eqref{eq.maxgt}.

We prove the dual upper bound \eqref{eq.upper-dual} by
contradiction. Without loss of generality, suppose that $t+2$ is the first time that
\eqref{eq.upper-dual} does not hold. Therefore, we have
\begin{subequations}
    \begin{equation}\label{eq.temp-001}
    \|\bm{\lambda}_{t+1}\|\leq \|\bar{\bm{\lambda}}\| = \mu M+\frac{2 GR+{R^2}/(2\alpha)+(\mu M^2)/{2}}{\epsilon-\bar{V}(\mathbf{g})}
\end{equation}
and correspondingly
\begin{equation}\label{eq.temp-002}
    \|\bm{\lambda}_{t+2}\|> \|\bar{\bm{\lambda}}\| = \mu M+\frac{2 GR+{R^2}/(2\alpha)+(\mu M^2)/{2}}{\epsilon-\bar{V}(\mathbf{g})}.
\end{equation}
\end{subequations}
In this case, it follows that
\begin{align}\label{eq.temp-003}
    \|\bm{\lambda}_{t+1}\| & \geq \|\bm{\lambda}_{t+2}\| - \|\bm{\lambda}_{t+2} - \bm{\lambda}_{t+1}\| \nonumber \\
                           & = \|\bm{\lambda}_{t+2}\| - \|[\bm{\lambda}_{t+1} + \mu \mathbf{g}_{t+1}(\mathbf{x}_{t+1})]^+ - \bm{\lambda}_{t+1}\| \nonumber \\
                           &\stackrel{(i)}{\geq} \|\bm{\lambda}_{t+2}\| - \|\mu \mathbf{g}_{t+1}(\mathbf{x}_{t+1})\| \nonumber \\
                           & \stackrel{(j)}{>} \frac{2 GR+{R^2}/(2\alpha)+(\mu M^2)/{2}}{\epsilon-\bar{V}(\mathbf{g})}
\end{align}
where (i) is due to the non-expansive property of the projection operator, and inequality (j) uses \eqref{eq.temp-002} and $\|\mathbf{g}_{t+1}(\mathbf{x}_{t+1})\|\leq
M$ in Assumption \ref{ass.1}.
However, since $\epsilon> \bar{V}(\mathbf{g})$, \eqref{eq.drift1} implies that we have $\Delta(\bm{\lambda}_{t+1})<0$ if
\eqref{eq.temp-003} holds. By definition of the dual drift,
$\Delta(\bm{\lambda}_{t+1})<0$ implies that
$\|\bm{\lambda}_{t+2}\|<\|\bm{\lambda}_{t+1}\|$, which contradicts
\eqref{eq.temp-001} and \eqref{eq.temp-002}. In addition,
observe that the dual variable is initialized by
$\bm{\lambda}_1 = \mathbf{0}$, and consequently
$\|\bm{\lambda}_2\| \leq \mu M$. Therefore, for every $t$,
we have that $\|\bm{\lambda}_t\|\leq \|\bar{\bm{\lambda}}\|$ holds.

Using the dual recursion in \eqref{eq.dual}, it follows that $  \bm{\lambda}_{T+1}\geq \bm{\lambda}_T+\mu \mathbf{g}_T(\mathbf{x}_T)\geq \bm{\lambda}_1+\sum_{t=1}^T\mu \mathbf{g}_t(\mathbf{x}_t)$.
Rearranging terms, we have
\begin{equation}\label{eq.36}
    \sum_{t=1}^T\mathbf{g}_t(\mathbf{x}_t) \leq \frac{\bm{\lambda}_{T+1}}{\mu}-\frac{\bm{\lambda}_1}{\mu}\leq \frac{\bm{\lambda}_{T+1}}{\mu}.
\end{equation}
With $\bm{\lambda}_{T+1}\geq \mathbf{0}$, \eqref{eq.36} implies
that
$\left[\sum_{t=1}^T\mathbf{g}_t(\mathbf{x}_t)\right]^+\!\!\leq
{\bm{\lambda}_{T+1}}/{\mu}$, which completes the proof by taking
norms on both sides and using the dual upper bound
\eqref{eq.upper-dual}.

\subsection{Proof of Theorem \ref{Them2}}\label{app.thm2}
    Per slot $t$, the primal update $\mathbf{x}_{t+1}$ is the minimizer of the optimization problem in \eqref{eq.primal}; hence,
    \begin{align}\label{eq.33}
        \!\!\! &\nabla^{\top} f_t(\mathbf{x}_t)\left(\mathbf{x}_{t+1}\!-\!\mathbf{x}_t\right)\!+\!\bm{\lambda}_{t+1}^{\top}\mathbf{g}_t(\mathbf{x}_{t+1})\!+\!\frac{\|\mathbf{x}_{t+1}\!-\!\mathbf{x}_t\|^2}{2\alpha}\\
       \!\!\! \stackrel{(a)}{\leq} &\nabla^{\top}\!\! f_t(\mathbf{x}_t)(\mathbf{x}^*_t\!-\!\mathbf{x}_t)\!+\!\bm{\lambda}_{t+1}^{\top}\mathbf{g}_t(\mathbf{x}_t^*)\!+\!\frac{\|\mathbf{x}^*_t\!-\!\mathbf{x}_t\|^2}{2\alpha}\!-\!\frac{\|\mathbf{x}_{t+1}\!-\!\mathbf{x}^*_t\|^2}{2\alpha}\nonumber
    \end{align}
    where (a) uses the strong convexity of the objective in \eqref{eq.primal}; see also \cite[Corollary 1]{yu2016}.
    Adding $f_t(\mathbf{x}_t)$ in \eqref{eq.33} yields
    \begin{align}
        &f_t(\mathbf{x}_t)\!+\!\nabla^{\top}\! f_t(\mathbf{x}_t)\left(\mathbf{x}_{t+1}\!-\!\mathbf{x}_t\right)\!+\!\bm{\lambda}_{t+1}^{\top}\mathbf{g}_t(\mathbf{x}_{t+1})\!+\!\frac{\|\mathbf{x}_{t+1}\!-\!\mathbf{x}_t\|^2}{2\alpha}\nonumber\\
        \leq &f_t(\mathbf{x}_t)\!+\!\nabla^{\top} f_t(\mathbf{x}_t)\left(\mathbf{x}^*_t\!-\!\mathbf{x}_t\right)\!+\!\bm{\lambda}_{t+1}^{\top}\mathbf{g}_t(\mathbf{x}_t^*)\nonumber\\
        &\qquad\qquad\qquad\qquad\qquad\qquad~\,+\frac{\|\mathbf{x}^*_t\!-\!\mathbf{x}_t\|^2}{2\alpha}-\frac{\|\mathbf{x}^*_t\!-\!\mathbf{x}_{t+1}\|^2}{2\alpha}\nonumber\\
        \stackrel{(b)}{\leq} &f_t(\mathbf{x}^*_t)+\bm{\lambda}_{t+1}^{\top}\mathbf{g}_t(\mathbf{x}_t^*)\!+\!\frac{\|\mathbf{x}^*_t\!-\!\mathbf{x}_t\|^2}{2\alpha}\!-\!\frac{\|\mathbf{x}^*_t\!-\!\mathbf{x}_{t+1}\|^2}{2\alpha}\nonumber\\
        \stackrel{(c)}{\leq}&f_t(\mathbf{x}^*_t)+\frac{\|\mathbf{x}^*_t\!-\!\mathbf{x}_t\|^2}{2\alpha}-\frac{\|\mathbf{x}^*_t\!-\!\mathbf{x}_{t+1}\|^2}{2\alpha}\label{eq.bd0}
    \end{align}
where (b) is due to the convexity of $f_t(\mathbf{x})$, and (c)
comes from the fact that $\bm{\lambda}_{t+1}\geq \mathbf{0}$ and
the per-slot optimal solution $\mathbf{x}^*_t$ is feasible (i.e.,
$\mathbf{g}_t(\mathbf{x}_t^*)\leq \mathbf{0}$) such that
$\bm{\lambda}_{t+1}^{\top}\mathbf{g}_t(\mathbf{x}_t^*)\leq 0$.

    Next, we bound the term $\nabla^{\top} f_t(\mathbf{x}_t)\left(\mathbf{x}_{t+1}\!-\!\mathbf{x}_t\right)$ by
    \begin{align}\label{eq.bd1}
    &-\nabla^{\top} f_t(\mathbf{x}_t)\left(\mathbf{x}_{t+1}\!-\!\mathbf{x}_t\right)\leq \|\nabla f_t(\mathbf{x}_t)\|\|\mathbf{x}_{t+1}-\mathbf{x}_t\|\\
    \leq &\frac{\|\nabla f_t(\mathbf{x}_t)\|^2}{2\eta}+\frac{\eta}{2}\|\mathbf{x}_{t+1}\!-\!\mathbf{x}_t\|^2 \stackrel{(d)}{\leq}\frac{G^2}{2\eta}+\frac{\eta}{2}\|\mathbf{x}_{t+1}-\mathbf{x}_t\|^2\nonumber
    \end{align}
where $\eta$ is an arbitrary positive constant, and (d) is from
the bound of gradients in Assumption \ref{ass.1}. Plugging
\eqref{eq.bd1} into \eqref{eq.bd0}, we have
    \begin{align} \label{eq.temp-004}
        &f_t(\mathbf{x}_t)+\bm{\lambda}_{t+1}^{\top}\mathbf{g}_t(\mathbf{x}_{t+1}) \leq  f_t(\mathbf{x}^*_t)+\Big(\frac{\eta}{2}-\frac{1}{2\alpha}\Big)\|\mathbf{x}_{t+1}\!-\!\mathbf{x}_t\|^2\nonumber\\
        &\qquad~\,+\frac{1}{2\alpha}\Big(\|\mathbf{x}^*_t\!-\!\mathbf{x}_t\|^2\!-\!\|\mathbf{x}^*_t\!-\!\mathbf{x}_{t+1}\|^2\Big)+\frac{G^2}{2\eta}\nonumber\\
    \stackrel{(e)}{=}&f_t(\mathbf{x}^*_t)\!+\!\frac{1}{2\alpha}\Big(\|\mathbf{x}^*_t\!-\!\mathbf{x}_t\|^2\!-\!\|\mathbf{x}^*_t\!-\!\mathbf{x}_{t+1}\|^2\Big)\!+\!\frac{\alpha G^2}{2}
    \end{align}
where equality (e) follows by choosing $\eta=1/\alpha$ to obtain $\eta/2-1/(2\alpha)=0$.

Using the dual drift bound \eqref{eq.lemma1} in Lemma
\ref{Lemma1} again, we have
 \begin{align}\label{eq.55}
        &\Delta(\bm{\lambda}_{t+1})/\mu+f_t(\mathbf{x}_t)\leq f_t(\mathbf{x}_t)+\bm{\lambda}_{t+1}^{\top}\mathbf{g}_t(\mathbf{x}_{t+1})\nonumber\\
        &\qquad+\bm{\lambda}_{t+1}^{\top}\mathbf{g}_{t+1}(\mathbf{x}_{t+1})-\bm{\lambda}_{t+1}^{\top}\mathbf{g}_t(\mathbf{x}_{t+1})+\frac{\mu}{2}\|\mathbf{g}_{t+1}(\mathbf{x}_{t+1})\|^2\nonumber\\
    \stackrel{(f)}{\leq} &f_t(\mathbf{x}^*_t)\!+\!\frac{1}{2\alpha}\Big(\|\mathbf{x}^*_t\!-\!\mathbf{x}_t\|^2\!-\!\|\mathbf{x}^*_t\!-\!\mathbf{x}_{t+1}\|^2\Big)\nonumber\\
    &+\!\bm{\lambda}_{t+1}^{\top}(\mathbf{g}_{t+1}(\mathbf{x}_{t+1})-\mathbf{g}_t(\mathbf{x}_{t+1}))\!+\!\frac{\mu\|\mathbf{g}_{t+1}(\mathbf{x}_{t+1})\|^2}{2}+\frac{\alpha G^2}{2}\nonumber\\
    \stackrel{(g)}{\leq}&f_t(\mathbf{x}^*_t)\!+\!\frac{1}{2\alpha}\Big(\|\mathbf{x}^*_t\!-\!\mathbf{x}_t\|^2\!-\!\|\mathbf{x}^*_t\!-\!\mathbf{x}_{t+1}\|^2\Big)\nonumber\\
    &\qquad\qquad+\bm{\lambda}_{t+1}^{\top}\left[\mathbf{g}_{t+1}(\mathbf{x}_{t+1})-\mathbf{g}_t(\mathbf{x}_{t+1})\right]^++\!\frac{\mu M^2}{2}\!+\frac{\alpha G^2}{2}\nonumber\\
    \stackrel{(h)}{\leq}&f_t(\mathbf{x}^*_t)\!+\!\frac{1}{2\alpha}\Big(\|\mathbf{x}^*_t\!-\!\mathbf{x}_t\|^2\!-\!\|\mathbf{x}^*_t\!-\!\mathbf{x}_{t+1}\|^2\Big)\!+\!\|\bm{\lambda}_{t+1}\|V(\mathbf{g}_t)\nonumber\\
    &\qquad\qquad+\frac{\mu M^2}{2}\!+\frac{\alpha G^2}{2}
    \end{align}
    where (f) follows from \eqref{eq.temp-004}; (g) uses non-negativity of $\bm{\lambda}_{t+1}$ and the gradient upper bound $\|\mathbf{g}_{t+1}(\mathbf{x})\|\leq M,\forall \mathbf{x}\in{\cal X}$; and (h) follows from the Cauchy-Schwartz inequality and the definition of the constraint variation $V(\mathbf{g}_t)$ in \eqref{eq.maxgt}.

   By interpolating intermediate terms in $\|\mathbf{x}^*_t\!-\!\mathbf{x}_t\|^2\!-\!\|\mathbf{x}^*_t\!-\!\mathbf{x}_{t+1}\|^2$, we have that
   \begin{align}\label{eq.56}
        &\|\mathbf{x}^*_t\!-\!\mathbf{x}_t\|^2\!-\!\|\mathbf{x}^*_t\!-\!\mathbf{x}_{t+1}\|^2\nonumber\\
       = &\|\mathbf{x}^*_t\!-\!\mathbf{x}_t\|^2\!-\!\|\mathbf{x}_t-\mathbf{x}^*_{t-1}\|^2+\|\mathbf{x}_t\!-\!\mathbf{x}^*_{t-1}\|^2\!-\!\|\mathbf{x}^*_t\!-\!\mathbf{x}_{t+1}\|^2\nonumber\\
        = &\|\mathbf{x}^*_t\!-\!\mathbf{x}^*_{t-1}\|\|\mathbf{x}^*_t-2\mathbf{x}_t+\mathbf{x}^*_{t-1}\|+\|\mathbf{x}_t\!-\!\mathbf{x}^*_{t-1}\|^2\!-\!\|\mathbf{x}^*_t\!-\!\mathbf{x}_{t+1}\|^2\nonumber\\
       \stackrel{(i)}{\leq} & 2R\|\mathbf{x}^*_t-\mathbf{x}^*_{t-1}\|+\|\mathbf{x}_t\!-\!\mathbf{x}^*_{t-1}\|^2\!-\!\|\mathbf{x}^*_t\!-\!\mathbf{x}_{t+1}\|^2
   \end{align}
   where (i) follows from the radius of ${\cal X}$ in Assumption \ref{ass.2} such that $\|\mathbf{x}^*_t-2\mathbf{x}_t+\mathbf{x}^*_{t-1}\|\leq \|\mathbf{x}^*_t-\mathbf{x}_t\|+\|\mathbf{x}_t-\mathbf{x}^*_{t-1}\|\leq 2R$.
Plugging \eqref{eq.56} into \eqref{eq.55}, it readily leads to
\begin{align}\label{eq.37}
    &\Delta(\bm{\lambda}_{t+1})/\mu+f_t(\mathbf{x}_t) \leq f_t(\mathbf{x}^*_t)\!+\!\|\bm{\lambda}_{t+1}\|V(\mathbf{g}_t)\!+\!\frac{\mu M^2}{2}\!+\frac{\alpha G^2}{2}
    \nonumber\\
    &+\frac{1}{2\alpha}\Big(2R\|\mathbf{x}^*_t\!-\!\mathbf{x}^*_{t-1}\|\!+\|\mathbf{x}_t\!-\!\mathbf{x}^*_{t-1}\|^2\!-\!\|\mathbf{x}^*_t\!-\!\mathbf{x}_{t+1}\|^2\Big).
\end{align}

Summing up \eqref{eq.37} over $t=1,2,\ldots,T$, we find
 \begin{align}
        &\sum_{t=1}^T\Delta(\bm{\lambda}_{t+1})/\mu+\sum_{t=1}^Tf_t(\mathbf{x}_t)\nonumber\\
    \leq &\sum_{t=1}^Tf_t(\mathbf{x}^*_t)\!+\!\frac{1}{2\alpha}\sum_{t=1}^T\left(\|\mathbf{x}_t\!-\!\mathbf{x}^*_{t-1}\|^2\!-\!\|\mathbf{x}^*_t\!-\!\mathbf{x}_{t+1}\|^2\right)\nonumber\\
    &\quad+\frac{RV(\{\mathbf{x}_t^*\}_{t=1}^T)}{\alpha}\!+\!\sum_{t=1}^T\|\bm{\lambda}_{t+1}\|V(\mathbf{g}_t)+\!\frac{\mu M^2T}{2}\!+\!\frac{\alpha G^2 T}{2}\nonumber\\
    \stackrel{(j)}{\leq}&\sum_{t=1}^Tf_t(\mathbf{x}^*_t)\!+\!\frac{1}{2\alpha}\left(\|\mathbf{x}_1\!-\!\mathbf{x}^*_0\|^2\!-\!\|\mathbf{x}^*_T\!-\!\mathbf{x}_{T+1}\|^2\right)+\!\frac{RV(\{\mathbf{x}_t^*\}_{t=1}^T)}{\alpha}\!\nonumber\\
    &\qquad\qquad~+\|\bar{\bm{\lambda}}\|\sum_{t=1}^T V(\mathbf{g}_t)\!+\!\frac{\mu M^2T}{2}\!+\!\frac{\alpha G^2T}{2}\nonumber\\
    \stackrel{(k)}{\leq} &\sum_{t=1}^Tf_t(\mathbf{x}^*_t)\!+\!\frac{1}{2\alpha}\left(\|\mathbf{x}_1\!-\!\mathbf{x}^*_0\|^2\right)\!+\!\frac{RV(\{\mathbf{x}_t^*\}_{t=1}^T)}{\alpha}\nonumber\\
    &\qquad\qquad~+\|\bar{\bm{\lambda}}\|V(\{\mathbf{g}_t\}_{t=1}^T)+\frac{\mu M^2T}{2}\!+\!\frac{\alpha G^2T}{2}
    \end{align}
    where (j) uses the upper bound of $\|\bm{\lambda}_t\|$ in \eqref{eq.upper-dual} that we define as $\|\bar{\bm{\lambda}}\|$, and (k) follows from the definition of accumulated variations $V(\{\mathbf{g}_t\}_{t=1}^T)$ in \eqref{eq.var-gt}. The definition of dynamic regret in \eqref{eq.dyn-reg} finally implies that
     \begin{align}
        {\rm Reg}^{\rm d}_T\leq &\frac{RV(\{\mathbf{x}_t^*\}_{t=1}^T)}{\alpha}+\!\frac{\|\mathbf{x}_1\!-\!\mathbf{x}^*_0\|^2}{2\alpha}\!+\!\|\bar{\bm{\lambda}}\|V(\{\mathbf{g}_t\}_{t=1}^T)\nonumber\\
        &+\!\frac{\mu M^2T}{2}\!+\!\frac{\alpha G^2T}{2}-\sum_{t=1}^T\frac{\Delta(\bm{\lambda}_{t+1})}{\mu}\nonumber\\
        = &\frac{RV(\{\mathbf{x}_t^*\}_{t=1}^T)}{\alpha}+\!\frac{\|\mathbf{x}_1\!-\!\mathbf{x}^*_0\|^2}{2\alpha}\!+\!\|\bar{\bm{\lambda}}\|V(\{\mathbf{g}_t\}_{t=1}^T)\nonumber\\
        &+\!\frac{\mu M^2T}{2}\!+\!\frac{\alpha G^2T}{2}-\frac{\|\bm{\lambda}_{T+2}\|^2}{2\mu}+\frac{\|\bm{\lambda}_2\|^2}{2\mu}\nonumber\\
        \stackrel{(l)}{\leq}&\frac{RV(\{\mathbf{x}_t^*\}_{t=1}^T)}{\alpha}+\!\frac{R^2}{2\alpha}\!+\!\|\bar{\bm{\lambda}}\|V(\{\mathbf{g}_t\}_{t=1}^T)\nonumber\\
        &+\!\frac{\mu M^2T}{2}\!+\!\frac{\alpha G^2T}{2}+\frac{\mu
        M^2}{2}
    \end{align}
    where ($l$) follows since: i) $\|\mathbf{x}_1-\mathbf{x}^*_0\| \leq R$ due to the compactness of $\mathcal{X}$; ii) $\|\bm{\lambda}_{T+2}\|^2\geq 0$; and, iii) $\|\bm{\lambda}_2\|^2 \leq \mu^2 M^2$ if
    $\bm{\lambda}_1=\mathbf{0}$. This completes the proof.

\bibliographystyle{IEEEtran}

\end{document}